\pdfoutput=1
\documentclass[centertags,12pt]{amsart}

\usepackage[foot]{amsaddr}
\usepackage{amssymb}
\usepackage{amsmath}
\usepackage{bm}
\usepackage{amsthm}
\usepackage{color}

\usepackage[hidelinks]{hyperref}

\usepackage[stretch=10,shrink=10]{microtype}

\textwidth = 16.00cm
\textheight = 22.00cm
\oddsidemargin = 0.12in
\evensidemargin = 0.12in
\setlength{\parindent}{0pt}
\setlength{\parskip}{5pt plus 2pt minus 1pt}

\makeatletter
\renewcommand{\subsection}{\@startsection
{subsection}{2}{0mm}{\baselineskip}{-0.25cm}
{\normalfont\normalsize\em}}
\makeatother

\newtheorem{theorem}{Theorem}
\newtheorem{proposition}[theorem]{Proposition}
\newtheorem{corollary}[theorem]{Corollary}
\newtheorem{lemma}[theorem]{Lemma}
{\theoremstyle{definition}
\newtheorem{definition}{Definition}
\newtheorem{example}{Example}
\theoremstyle{remark}
\newtheorem{remark}{Remark}

\DeclareMathOperator{\Supp}{supp}
\newcommand{\illum}[2]{\rho_{#1}(#2)}

\usepackage{algorithm}
\usepackage{algorithmic}

\begin{document}

\title[Algorithmic Approach to EAQECCs from the Hermitian Curve]{An Algorithmic Approach to Entanglement-Assisted Quantum Error-Correcting Codes from the Hermitian Curve}

\author{R. B. Christensen$^1$}
\address{$^1$Department of Mathematical Sciences, Aalborg University, Skjernvej 4A, 9220 Aalborg Øst, Denmark. Orcid 0000-0002-9209-3739}
\email{rene@math.aau.dk}

\author{C. Munuera$^2$}
\address{$^2$IMUVA-Mathematics Research Institute, Universidad de Valladolid, Paseo Bel\'en 7, 47011 Valladolid, Spain. Orcid 0000-0001-8386-3060}
\email{cmunuera@uva.es}

\author{F. R. F. Pereira$^3$}
\address{$^3$Department of Electrical Engineering, Federal University of Campina Grande, Rua Apr\'igio Veloso 882, 58190-970, Campina Grande, Para\'iba, Brazil. School of Science and Technology, University of Camerino, I-62032 Camerino, Italy. INFN, Sezione di Perugia, Via A. Pascoli, 06123 Perugia, Italy. Orcid 0000-0001-5638-6334}
\email{revson.ee@gmail.com}

\author{D. Ruano$^4$}
\address{$^4$IMUVA-Mathematics Research Institute, Universidad de Valladolid, Paseo Bel\'en 7, 47011 Valladolid, Spain. Orcid 0000-0001-7304-0087}
\email{diego.ruano@uva.es}

\begin{abstract}
We study entanglement-assisted quantum error-correcting codes (EAQECCs) arising from  classical one-point algebraic geometry codes  from the Hermitian curve with respect to the Hermitian inner product. Their only unknown parameter is $c$, the number of required maximally entangled quantum states since the Hermitian dual of an AG code is unknown. In this article, we present an efficient algorithmic approach for computing $c$ for this family of EAQECCs. As a result, this algorithm allows us to provide EAQECCs with excellent parameters over any field size.
\end{abstract}

\keywords{Quantum error-correcting code, CSS construction,  entanglement-assisted quantum error-correcting code, Hermitian code.}
\subjclass{94B65, 81P70, 94B05}
\maketitle

{\small
    {\em Funding:} 
    This work was supported in part by Grant PGC2018-096446-B-C21 funded by\\MCIN/AEI/10.13039/501100011033 and by ``ERDF A way of making Europe'', by Grant RYC-2016-20208 funded by MCIN/AEI/10.13039/501100011033 and by ``ESF Investing in your future'', and by the European Union's Horizon 2020 research and innovation programme, under grant agreement QUARTET No 862644.
}

{
    \renewcommand\thefootnote{}
    \footnote{
        \rule{0em}{1.8em}\hspace{-\parindent}\ignorespacesafterend
This is a pre-copy-editing, author-produced PDF of an article accepted for publication in `Advances in Mathematics of Communications' following peer review. The definitive publisher-authenticated version is available online at: \url{https://doi.org/10.3934/amc.2021072}}
    \addtocounter{footnote}{-1}
}

\section{Introduction}
\label{section1}

Quantum error correcting codes (QECCs) can be constructed by means of classical linear codes with the use of the CSS method~\cite{Calderbank,Gottesman,Ketkar}. QECCs over $\mathbb{F}_q$, the finite field with $q$ elements, are constructed from self-orthogonal classical linear codes over $\mathbb{F}_q$ if we consider the Euclidean metric, and over $\mathbb{F}_{q^2}$ if we consider the Hermitian metric. In general the Hermitian metric produces QECCs with better parameters since we may consider linear codes over a bigger field, the downside being, however, that the computations are more involved.

Brun et al. proposed in \cite{Brun} a pre-sharing entanglement protocol between encoder and decoder which increases the communication capacity and simplifies the theory of  quantum error-correction since it is not restricted to self-orthogonal classical linear codes. These quantum codes are known as  {\em entanglement-assisted quantum error-correcting codes} (EAQECCs). They were introduced over the binary field and then extended for an arbitrary finite field in \cite{Galindo:2019}. As for QECCs, we may consider classical linear codes over $\mathbb{F}_q$ if we consider the Euclidean metric, and over $\mathbb{F}_{q^2}$ if we consider the Hermitian metric. EAQECCs have four fundamental parameters $[[n,k,d;c]]$: length, dimension, minimum distance and the number of required pairs of maximally entangled quantum states, respectively. 
  
Among the classical codes used to construct quantum codes by means of the CSS construction, codes arising from algebraic geometry resources and tools (the so-called algebraic geometry codes or AG codes for short) have revealed themselves as good candidates to obtain quantum codes with good parameters \cite{Ashikhmin,Bartoli,Kim,Matsumoto}. In particular, we should highlight AG codes from Castle curves (the so-called Castle codes) \cite{Fer,MTT}, whose prototype is the Hermitian curve, which has traditionally played an important role in the theory of AG codes. Therefore, it is natural to consider the Hermitian curve to obtain EAQECCs with good parameters.

EAQECCs from AG codes have been addressed in \cite{Francisco1,Francisco2}, where the Euclidean inner product is mainly considered. In this work, we consider Hermitian codes \cite{HLP} -- that is, one-point AG codes from the Hermitian curve over $\mathbb{F}_{q^2}$ -- to obtain EAQECCs considering the Hermitian inner product. The length, dimension, and minimum distance of the EAQECCs from the Hermitian curve are known. Only one parameter remains: the number of required maximally entangled quantum states, $c$. Unlike the other parameters, however, $c$ is unknown if we consider the Hermitian inner product. Thus, the main task of this article is to compute $c$.

We may compute $c$ as the rank of a product of generator matrices, specifically the generator matrix of the classical linear code and the generator matrix of its Hermitian dual  (see \cite[Proposition 3]{Galindo:2019}). However, the Hermitian dual of an AG code (and of a Hermitian code) is unknown in general and, hence, there is no formula for computing $c$ in the Hermitian metric case. In this work, we present an algorithmic approach for computing $c$. The algorithm is based on a careful analysis of reduced polynomials modulo the Hermitian curve equation and on the valuation given by the pole order at the infinity point. This allows us to compute the parameters of EAQECCs from the Hermitian curve and illustrate that they have excellent parameters. The computational complexity of our algorithm is $\mathcal{O}(q^5)$, which is significantly better than computing a rank of a matrix using methods from linear algebra since the length of a Hermitian code is $q^3$. 

Finally, by considering $q$-adic expansions and Lucas' theorem, we are able to reduce the number of iterations in our algorithm by a careful analysis of the $q$-th powers of Hermitian codes. The proofs are different for $q$ prime and non-prime, but they are developed in a parallel way. Naturally, the proof for the non-prime case is more involved and technical.

The structure of this paper is as follows. In Section \ref{section2}, we present some standard results on classical and quantum codes. 
In Section \ref{Sect3}, we review Hermitian codes and introduce the idea of reduced polynomials that is used in the following section. 
Section \ref{Sect4} shows the first main result: we describe an algorithmic approach to the problem of determining the parameters of EAQECCs derived from a one-point Hermitian code. Moreover, we provide codes with excellent parameters at the end of the section. Finally, in Section \ref{Sect5}, we analyze the $q$-th powers of Hermitian codes, which allow us to improve the number of iterations of the algorithm in Section \ref{Sect4}. We also compute the computational complexity of our algorithm and show, computationally, that the parameters of most of the EAQECCs in this family exceed the Gilbert-Varshamov bound \cite{Galindo:2019}.
 
The ideas and methods developed in this article can serve as a paradigm to  study EAQECCs coming from of other AG codes obtained from different curves, and especially from Castle-type curves.

\section{Preliminaries}
\label{section2}

In this section we recall some preliminary and well known results about linear vector spaces over a finite field, the Frobenius map and entan\-glement-assisted quantum codes. 
A more complete study of these topics can be found in \cite{Galindo:2019,LaGuardia,Ketkar,LidlNiederreiter:Book}.

\subsection{The Frobenius map and the Hermitian dual of a subspace}
\label{section2.1}

Let $q=p^r$ be a prime power, and let $\mathbb{F}_{q^2}$ be the finite field with $q^2$ elements.
Additionally, let $\sigma:\mathbb{F}_{q^2}\rightarrow\mathbb{F}_{q^2}$ be the Frobenius map, $\sigma(x)=x^q$.
It is well known that $\sigma$ is an $\mathbb{F}_q$-automorphism of $\mathbb{F}_{q^2}$ and an involution,
that is $\sigma^2=id$. Furthermore, considering the same notation, $\sigma$ may be extended to the map
$\sigma:\mathbb{F}_{q^2}^n\rightarrow\mathbb{F}_{q^2}^n$, by $\sigma(\mathbf{v})=(v_1^q,\dots,v_n^q)$.
This map is $\sigma$-semilinear and bijective, thus it preserves  intersections and linear dependence: given
$V,W\subset \mathbb{F}_{q^2}^n$, we have  $\sigma(V\cap W)=\sigma(V)\cap\sigma(W)$ and
$\dim (\langle V\rangle)=\dim (\langle \sigma(V)\rangle)$, where $\langle V\rangle$
denotes the $\mathbb{F}_{q^2}$-linear subspace spanned by $V$. Following the usual notation in linear coding theory,
we often write $\mathbf{v}^q=\sigma(\mathbf{v})$ and $V^q=\sigma(V)=\{\mathbf{v}^q \ : \ \mathbf{v}\in V\}$.

The Euclidean and Hermitian inner products in $\mathbb{F}_{q^2}^n$ are defined as follows:
given $\mathbf{v}, \mathbf{w}\in \mathbb{F}_{q^2}^n$,
\[
  \mathbf{v} \cdot \mathbf{w}= v_1w_1+\cdots+ v_nw_n \; \mbox{ and } \;
  \mathbf{v} \cdot_H \mathbf{w}=\sigma(\mathbf{v}) \cdot \mathbf{w}.
\]
Given a linear subspace $V\subset \mathbb{F}_{q^2}^n$, its Euclidean and Hermitian duals are, respectively,
\begin{eqnarray*}
V^{\perp}&=&\{\mathbf{x}\in\mathbb{F}_{q^2}^n \ : \ \mathbf{v} \cdot \mathbf{x}=0 \mbox{ for all $\mathbf{v}\in V$} \}, \\
V^{\perp_H}&=&\{\mathbf{x}\in\mathbb{F}_{q^2}^n \ : \ \mathbf{v} \cdot_H \mathbf{x}=0 \mbox{ for all $\mathbf{v}\in V$} \}.
\end{eqnarray*}
We define $\Delta(V)=\dim(V\cap V^{\perp_H})$.

\begin{proposition} \label{propDelta(V)}
Let $V\subset \mathbb{F}_{q^2}^n$ be a linear subspace. We have \newline
(a) $V^{\perp_H}=(V^q)^{\perp}=(V^{\perp})^q$;\newline
(b) $\Delta(V)=\dim (V^q\cap V^{\perp})$; \newline
(c) $\Delta(V)=\Delta(V^{\perp})$.
\end{proposition}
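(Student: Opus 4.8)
The plan is to prove the three parts in sequence, using only elementary properties of the Frobenius map $\sigma$ recalled just before the statement, namely that $\sigma$ is a $\sigma$-semilinear bijection on $\mathbb{F}_{q^2}^n$ which is an involution, preserves intersections, and preserves dimensions of spans.

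For part (a), I would unwind the definitions directly. By definition $\mathbf{x}\in V^{\perp_H}$ means $\sigma(\mathbf{v})\cdot\mathbf{x}=0$ for all $\mathbf{v}\in V$, i.e. $\mathbf{w}\cdot\mathbf{x}=0$ for all $\mathbf{w}\in V^q$ (using surjectivity of $\sigma$ on $V$ onto $V^q$), which is exactly $\mathbf{x}\in(V^q)^{\perp}$; this gives the first equality $V^{\perp_H}=(V^q)^{\perp}$. For the second equality $(V^q)^{\perp}=(V^{\perp})^q$, I would show both inclusions (or one inclusion plus a dimension count, since $\dim(V^q)=\dim V$ forces both sides to have dimension $n-\dim V$). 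The key identity is that $\sigma(\mathbf{v})\cdot\sigma(\mathbf{x})=\sigma(\mathbf{v}\cdot\mathbf{x})=(\mathbf{v}\cdot\mathbf{x})^q$, so $\mathbf{v}\cdot\mathbf{x}=0$ if and only if $\mathbf{v}^q\cdot\mathbf{x}^q=0$; applying this with $\mathbf{x}\in V^{\perp}$ ranging over $V^{\perp}$ and $\mathbf{v}$ over $V$ shows $(V^{\perp})^q\subseteq(V^q)^{\perp}$, and the reverse follows by applying $\sigma$ again and using $\sigma^2=\mathrm{id}$.

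For part (b), I would simply apply part (a): $V\cap V^{\perp_H}=V\cap(V^{\perp})^q$, so $\Delta(V)=\dim\big(V\cap(V^{\perp})^q\big)$. Now apply the bijective $\sigma$ to this intersection; since $\sigma$ preserves intersections and dimensions, $\dim\big(V\cap(V^{\perp})^q\big)=\dim\big(\sigma(V)\cap\sigma((V^{\perp})^q)\big)=\dim\big(V^q\cap V^{\perp}\big)$, using $\sigma^2=\mathrm{id}$ on the second factor. This is exactly the claimed formula.

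For part (c), the cleanest route is to use the biduality $V^{\perp\perp}=V$ (a standard fact for finite-dimensional spaces over a finite field, which I would cite or note) together with part (b). Applying (b) to the subspace $V^{\perp}$ in place of $V$ gives $\Delta(V^{\perp})=\dim\big((V^{\perp})^q\cap V^{\perp\perp}\big)=\dim\big((V^q)^{\perp}\cap V\big)$, where I used part (a) again to rewrite $(V^{\perp})^q=(V^q)^{\perp}$ and biduality to rewrite $V^{\perp\perp}=V$. Comparing with part (b), $\Delta(V)=\dim(V^q\cap V^{\perp})$, and these two intersections are literally the same set, so $\Delta(V)=\Delta(V^{\perp})$. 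The main thing to be careful about is not a deep obstacle but bookkeeping: keeping straight which occurrences of $\perp$, $\perp_H$, and $(\cdot)^q$ commute with which, and invoking $\sigma^2=\mathrm{id}$ and $\sigma$-preservation of intersections at exactly the right places; once part (a) is set up correctly, parts (b) and (c) are short formal consequences.
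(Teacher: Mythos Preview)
Your approach is essentially identical to the paper's, and parts (a) and (b) are fine. However, there is a bookkeeping slip at the very end of part (c). You arrive at $\Delta(V^{\perp})=\dim\big((V^q)^{\perp}\cap V\big)$ and then claim this intersection is ``literally the same set'' as $V^q\cap V^{\perp}$ from (b). It is not: $(V^q)^{\perp}\cap V$ and $V^q\cap V^{\perp}$ are different subspaces in general (they are images of one another under $\sigma$, hence equidimensional, but not equal). The quick fix is to compare instead with the \emph{definition} of $\Delta(V)$: since $(V^q)^{\perp}=V^{\perp_H}$ by (a), you have $(V^q)^{\perp}\cap V=V\cap V^{\perp_H}$, whose dimension is $\Delta(V)$ by definition. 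Alternatively, skip the rewriting via (a) and compare $\dim\big((V^{\perp})^q\cap V\big)$ directly with the intermediate equality $\Delta(V)=\dim\big(V\cap(V^{\perp})^q\big)$ that appeared in your own proof of (b); those two intersections \emph{are} literally the same set, and this is exactly how the paper closes the argument. Ironically, this is precisely the kind of $\perp$/$\perp_H$/$(\cdot)^q$ bookkeeping you flagged as the main thing to watch.
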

\begin{proof}
(a) The first equality is clear. The second one follows from the fact that
$\mathbf{v}^q \cdot \mathbf{x}=0$ if and only if $\mathbf{v} \cdot \mathbf{x}^q=0$. Write $\mathbf{y}=\mathbf{x}^q$.
Then $V^{\perp_H}=\{\mathbf{y}^q\in\mathbb{F}_{q^2}^n \ : \ \mathbf{v} \cdot \mathbf{y}=0 \mbox{ for all $\mathbf{v}\in V$} \}=(V^{\perp})^q$,
and the second equality of (a) holds. (b) Applying $\sigma$ we have
$\Delta(V)=\dim(V\cap (V^{\perp})^q)=\dim(V^q \cap V^{\perp})$ which proves the statement in (b).
(c) Since $(V^{\perp})^{\perp}=V$ we get
$\Delta(V^{\perp})=\dim((V^{\perp})^q\cap (V^{\perp})^{\perp})=\Delta(V)$.
\end{proof}

In what follows, we use the expression given by Proposition \ref{propDelta(V)}(b) to compute $\Delta(V)$.

\subsection{Entanglement-Assisted Quantum Error-Correcting Codes}
\label{Sect2.2}

A quantum code $\mathcal{Q}$ is called an $[[n,k,d;c]]_q$ {\em entanglement-assisted quantum error-correcting code} (EAQECC) if it encodes $k$ logical qudits into $n$ physical qudits using $c$ copies of maximally entangled quantum  states
and it can correct $\lfloor(d-1)/2\rfloor$ quantum errors. The {\em rate} of an EAQECC is  $k/n$, its {\em relative distance} is $d/n$, and its {\em entanglement-assisted rate} is  $c/n$. When  $c = n-k$ then the EAQECC is said to have {\em maximal entanglement}.

\begin{proposition}\cite[Proposition 3 and Corollary 1]{Galindo:2019} \label{Prep:WildeHerm}
Let $\mathcal{C}$ be a linear code over $\mathbb{F}_{q^2}$ with parameters $[n,k,d]_{q^2}$.
Then there is an EAQECC $\mathcal{Q}$ with parameters
$[[n,2k-n+c, d'; c]]_q$, where $d'$
is the minimum Hamming weight of a vector in the set $\mathcal{C}\setminus(\mathcal{C}\cap \mathcal{C}^{\perp_H})$, and
\begin{equation}\label{formulac}
  c = \dim \mathcal{C}^{\perp_H} - \dim (\mathcal{C}^{\perp_H}\cap \mathcal{C})
\end{equation}
is the number of required maximally entangled quantum states.
\end{proposition}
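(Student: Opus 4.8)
The plan is to reduce the statement, which lives over $\mathbb{F}_{q^2}$ equipped with the Hermitian form, to the already-established entanglement-assisted stabilizer formalism over $\mathbb{F}_q$ equipped with the symplectic form, by ``unfolding'' each $\mathbb{F}_{q^2}$-coordinate into a pair of $\mathbb{F}_q$-coordinates in a way that carries the Hermitian inner product to the standard symplectic form. Write $\operatorname{Tr}$ for the field trace $\mathbb{F}_{q^2}\to\mathbb{F}_q$ and fix $\beta\in\mathbb{F}_{q^2}$ with $\beta\neq 0$ and $\operatorname{Tr}(\beta)=0$ (such $\beta$ exists as a nonzero element of the one-dimensional kernel of $\operatorname{Tr}$). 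On $\mathbb{F}_{q^2}\cong\mathbb{F}_q^{2}$ consider the $\mathbb{F}_q$-bilinear form $b(x,y)=\operatorname{Tr}(\beta\,x^{q}y)$. Because $x^{q+1}\in\mathbb{F}_q$ we get $b(x,x)=\operatorname{Tr}(\beta\,x^{q+1})=x^{q+1}\operatorname{Tr}(\beta)=0$, so $b$ is alternating; and if $b(x,y)=0$ for every $y$ then, $\operatorname{Tr}$ being surjective, $\beta x^{q}=0$ and hence $x=0$, so $b$ is nondegenerate. Therefore there is an $\mathbb{F}_q$-linear isomorphism $\psi_{0}\colon\mathbb{F}_{q^2}\to\mathbb{F}_q^{2}$ taking $b$ to the standard symplectic form on $\mathbb{F}_q^{2}$; applying $\psi_{0}$ in each of the $n$ coordinates yields an $\mathbb{F}_q$-linear isomorphism $\psi\colon\mathbb{F}_{q^2}^{n}\to\mathbb{F}_q^{2n}$ that preserves weight (the Hamming weight of $\mathbf v$ equals the symplectic weight of $\psi(\mathbf v)$) and satisfies $\langle\psi(\mathbf u),\psi(\mathbf v)\rangle_{s}=\operatorname{Tr}\!\bigl(\beta\,(\mathbf u\cdot_{H}\mathbf v)\bigr)$ for all $\mathbf u,\mathbf v\in\mathbb{F}_{q^2}^{n}$, where $\langle\cdot,\cdot\rangle_{s}$ denotes the symplectic form on $\mathbb{F}_q^{2n}$.

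The one genuinely code-theoretic ingredient is the identity $\psi(U)^{\perp_{s}}=\psi\bigl(U^{\perp_{H}}\bigr)$ for every $\mathbb{F}_{q^2}$-linear subspace $U\subseteq\mathbb{F}_{q^2}^{n}$. Indeed, $\psi(\mathbf v)\in\psi(U)^{\perp_{s}}$ means $\operatorname{Tr}\bigl(\beta\,(\mathbf u\cdot_{H}\mathbf v)\bigr)=0$ for all $\mathbf u\in U$; since $U$ is stable under scaling by $\mathbb{F}_{q^2}$ and $\beta\neq 0$, replacing $\mathbf u$ by $\lambda\mathbf u$ lets $\beta\lambda$ run over all of $\mathbb{F}_{q^2}$, so nondegeneracy of $\operatorname{Tr}$ forces $\mathbf u\cdot_{H}\mathbf v=0$ for every $\mathbf u\in U$, that is $\mathbf v\in U^{\perp_{H}}$; the reverse inclusion is immediate, and both subspaces have $\mathbb{F}_q$-dimension $2(n-\dim U)$. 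Setting $W:=\psi\bigl(\mathcal{C}^{\perp_{H}}\bigr)$ and using $\bigl(\mathcal{C}^{\perp_{H}}\bigr)^{\perp_{H}}=\mathcal{C}$ --- which follows from Proposition~\ref{propDelta(V)}(a) and $\sigma^{2}=\mathrm{id}$ --- this gives $W^{\perp_{s}}=\psi(\mathcal{C})$, and therefore $W\cap W^{\perp_{s}}=\psi\bigl(\mathcal{C}^{\perp_{H}}\cap\mathcal{C}\bigr)$ since $\psi$ is an injective $\mathbb{F}_q$-linear map.

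It then remains to quote the entanglement-assisted stabilizer construction over $\mathbb{F}_q$ (Brun--Devetak--Hsieh and its $q$-ary generalization, as used in \cite{Galindo:2019}): an $\mathbb{F}_q$-subspace $W\subseteq\mathbb{F}_q^{2n}$ produces an EAQECC over $\mathbb{F}_q$ of length $n$ that uses $c=\tfrac{1}{2}\bigl(\dim_{\mathbb{F}_q}W-\dim_{\mathbb{F}_q}(W\cap W^{\perp_{s}})\bigr)$ maximally entangled pairs, has dimension $n-\dim_{\mathbb{F}_q}W+c$, and has minimum distance equal to the least symplectic weight attained on $W^{\perp_{s}}\setminus(W\cap W^{\perp_{s}})$. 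Plugging in $W=\psi\bigl(\mathcal{C}^{\perp_{H}}\bigr)$, so that $\dim_{\mathbb{F}_q}W=2\dim\mathcal{C}^{\perp_{H}}=2(n-k)$ and $\dim_{\mathbb{F}_q}(W\cap W^{\perp_{s}})=2\dim\bigl(\mathcal{C}^{\perp_{H}}\cap\mathcal{C}\bigr)$, gives at once $c=\dim\mathcal{C}^{\perp_{H}}-\dim\bigl(\mathcal{C}^{\perp_{H}}\cap\mathcal{C}\bigr)$, which is \eqref{formulac}; dimension $n-2(n-k)+c=2k-n+c$; and, as $\psi$ preserves weight, minimum distance equal to the least Hamming weight of a vector of $\mathcal{C}\setminus\bigl(\mathcal{C}\cap\mathcal{C}^{\perp_{H}}\bigr)$, that is $d'$.

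The reduction above is elementary linear algebra over $\mathbb{F}_q$ and $\mathbb{F}_{q^2}$; its only subtlety is the choice of $\beta$ --- one needs $\operatorname{Tr}(\beta)=0$ so that $b$ is alternating and the unfolding lands in a genuine symplectic space, yet $\beta\neq 0$ so that the $\mathbb{F}_{q^2}$-scaling argument in the second step still detects Hermitian orthogonality. The real content, and the main obstacle, is the quantum-information statement being cited: that an arbitrary symplectic subspace $W$ decomposes as its radical $W\cap W^{\perp_{s}}$ together with a symplectically nondegenerate complement of dimension $2c$, the radical supplying honest commuting stabilizer generators while the complement is ``repaired'' by $c$ pre-shared entangled pairs, with precisely the resulting dimension and distance. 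I would take that result as given rather than reprove it.
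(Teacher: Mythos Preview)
The paper does not supply its own proof of this proposition; it is quoted verbatim from \cite{Galindo:2019} (Proposition~3 and Corollary~1 there), so there is nothing to compare against directly. Your argument is correct and is in fact the standard route used in that reference and elsewhere: unfold $\mathbb{F}_{q^2}^{n}$ to $\mathbb{F}_q^{2n}$ via a trace pairing so that Hermitian orthogonality becomes symplectic orthogonality, then invoke the general $q$-ary entanglement-assisted stabilizer construction for an arbitrary symplectic subspace.

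One cosmetic slip worth flagging: with the paper's convention $\mathbf{u}\cdot_H\mathbf{v}=\sigma(\mathbf{u})\cdot\mathbf{v}$, the Hermitian product is conjugate-linear in the first slot, so replacing $\mathbf{u}$ by $\lambda\mathbf{u}$ scales the product by $\lambda^{q}$, and it is $\beta\lambda^{q}$ (not $\beta\lambda$) that sweeps out $\mathbb{F}_{q^2}$. Since Frobenius is a bijection on $\mathbb{F}_{q^2}$ this does not affect the conclusion.
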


From Propositions~\ref{propDelta(V)} and \ref{Prep:WildeHerm}, the amount of entanglement of a code $\mathcal{Q}$ given by Eq. (\ref{formulac}) can be computed as 
$c = \dim (\mathcal{C^\perp}) - \Delta(\mathcal{C})=n-\dim (\mathcal{C}) - \Delta(\mathcal{C})$. In this paper we are going to use this
formula to determine $c$. Furthermore, we clearly have that $d'\ge d(\mathcal{C})$.

In order to show the performance of the EAQECCs given in this article we will consider the Gilbert-Varshamov bound \cite[Theorem 5]{Galindo:2019} for EAQECCs considering the Hermitian inner product. 
 
\begin{theorem}\label{thmgv}
Assume the existence of positive integers $n$, $k\leq n$, $d$, $c\leq (n-k)/2$
such that
  \begin{equation}
    \frac{q^{n+k}-q^{n-k-2c}}{q^{2n}-1}\sum_{i=1}^{d - 1}
         {n \choose i}(q^2-1)^i < 1. \label{eqgv}
  \end{equation}
Then there exists an EAQECC $\mathcal{Q}$ with parameters
$[[n,k-c, d; c]]_q$.
\end{theorem}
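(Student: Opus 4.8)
The plan is to combine Proposition~\ref{Prep:WildeHerm} with a Gilbert--Varshamov style counting argument carried out over a suitable family of classical codes over $\mathbb{F}_{q^2}$. First I would reduce to a statement about classical codes. Set $\kappa=(n+k)/2-c$; the hypotheses $k\le n$ and $c\le (n-k)/2$ give $k\le\kappa\le n$, and I will assume $n\equiv k\pmod 2$ so that $\kappa$ and $(n-k)/2$ are integers (the other parity being handled by an obvious adjustment). By Proposition~\ref{Prep:WildeHerm} and the identity $c=n-\dim\mathcal{C}-\Delta(\mathcal{C})$ recorded after it, any linear code $\mathcal{C}\subseteq\mathbb{F}_{q^2}^n$ with $\dim\mathcal{C}=\kappa$, with $\Delta(\mathcal{C})=(n-k)/2$, and such that every $\mathbf{v}\in\mathcal{C}\setminus(\mathcal{C}\cap\mathcal{C}^{\perp_H})$ has $\mathrm{wt}(\mathbf{v})\ge d$, yields an EAQECC with parameters $[[\,n,\ \kappa-\Delta(\mathcal{C}),\ d';\ n-\kappa-\Delta(\mathcal{C})\,]]_q=[[\,n,\ k-c,\ d';\ c\,]]_q$ with $d'\ge d$. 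So it suffices to produce such a $\mathcal{C}$.

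Next I would build in the hull. A ``generic'' $\kappa$-dimensional code has a very small Hermitian hull, so the constraint $\Delta(\mathcal{C})=(n-k)/2$ must be forced. Fix a totally isotropic subspace $H_0\subseteq\mathbb{F}_{q^2}^n$ for the Hermitian form with $\dim H_0=(n-k)/2$ (such $H_0$ exists since $(n-k)/2\le n/2$). For a code with $H_0\subseteq\mathcal{C}\subseteq H_0^{\perp_H}$ one automatically gets $H_0\subseteq\mathcal{C}\cap\mathcal{C}^{\perp_H}$, and equality $\mathcal{C}\cap\mathcal{C}^{\perp_H}=H_0$ holds precisely when the image $\overline{\mathcal{C}}$ of $\mathcal{C}$ in the non-degenerate Hermitian space $V:=H_0^{\perp_H}/H_0$ (of dimension $k$) is itself non-degenerate. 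Since $\dim\overline{\mathcal{C}}=\kappa-(n-k)/2=k-c$, the problem becomes: find a $(k-c)$-dimensional non-degenerate subspace $\overline{\mathcal{C}}$ of $V$ whose preimage $\mathcal{C}$ contains no vector of weight $<d$ lying outside $H_0$ (vectors of $H_0$ are harmless, being in the hull).

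Then comes the count. Let $N$ be the number of admissible codes $\mathcal{C}$ (equivalently, of non-degenerate $(k-c)$-dimensional subspaces of $V$), obtained from the standard enumeration of non-degenerate subspaces of a Hermitian space as a function of $q$. Call $\mathcal{C}$ \emph{bad} if it contains some $\mathbf{v}$ with $0<\mathrm{wt}(\mathbf{v})<d$ and $\mathbf{v}\notin H_0$, and for a fixed such $\mathbf{v}$ let $M$ bound the number of admissible codes passing through a fixed vector outside $H_0$ (again a Gaussian-binomial / unitary-orbit computation). Counting (code, bad vector) pairs in two ways gives
\[
  \#\{\text{bad codes}\}\ \le\ M\sum_{i=1}^{d-1}\binom{n}{i}(q^2-1)^i ,
\]
and the arithmetic of $N$ and $M$ is arranged so that $M/N$ is bounded by the factor $(q^{n+k}-q^{n-k-2c})/(q^{2n}-1)$ appearing in~\eqref{eqgv}. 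Hypothesis~\eqref{eqgv} then forces $\#\{\text{bad codes}\}<N$, so some admissible $\mathcal{C}$ is not bad, and by the first step it yields the asserted $[[\,n,k-c,d;c\,]]_q$ EAQECC.

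The hard part will be the two enumeration facts: the value of $N$ (non-degenerate $(k-c)$-dimensional subspaces of a $k$-dimensional Hermitian space over $\mathbb{F}_{q^2}$) and the bound on $M$ (those whose preimage meets a prescribed vector outside $H_0$, a count that naturally splits according to the norm $\mathbf{v}\cdot_H\mathbf{v}\in\mathbb{F}_q$ of that vector). Everything else mirrors the classical Gilbert--Varshamov proof; the technical core is performing these counts and verifying that their ratio collapses to exactly the expression in~\eqref{eqgv}, which is also the step most sensitive to the parity of $n-k$ and to how close $c$ is to $(n-k)/2$. A possible shortcut, if available in the literature, is to invoke directly an existence theorem for classical codes of prescribed length, dimension, hull dimension and minimum distance, and then simply apply Proposition~\ref{Prep:WildeHerm}.
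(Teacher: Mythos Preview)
The paper does not prove this theorem: it is simply quoted from \cite[Theorem~5]{Galindo:2019} and used as a benchmark against which the constructed codes are measured (Tables~\ref{tab:tabla2} and~\ref{tab:tabla6}, Section~\ref{reults}). There is therefore no ``paper's own proof'' for you to be compared against here.

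As for the proposal itself, the outline is a reasonable Gilbert--Varshamov strategy --- fix a totally isotropic $H_0$ of the desired hull dimension, work in the non-degenerate quotient $H_0^{\perp_H}/H_0$, and do a union bound over low-weight vectors --- and the reduction to Proposition~\ref{Prep:WildeHerm} is set up correctly (your bookkeeping $2\kappa-n+c=k-c$ and $n-\kappa-\Delta=c$ checks out). But by your own admission the argument is incomplete: the two enumeration facts (the count $N$ of non-degenerate $(k-c)$-dimensional subspaces and the bound $M$ on those containing a prescribed vector) are deferred, and the crucial identification of $M/N$ with $(q^{n+k}-q^{n-k-2c})/(q^{2n}-1)$ is asserted rather than verified. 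You also impose $n\equiv k\pmod 2$, which the statement does not, and wave away the other parity; since both your $\kappa=(n+k)/2-c$ and $\dim H_0=(n-k)/2$ can fail to be integers, this is not a cosmetic point. If you intend to actually supply a proof, the shortcut you mention at the end --- cite the relevant existence result for classical codes with prescribed Hermitian hull and then apply Proposition~\ref{Prep:WildeHerm} --- is exactly what the present paper does, in a single line, by citing \cite{Galindo:2019}.
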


Furthermore, we will consider the entanglement-assisted analog of the Singleton bound given in \cite[Theorem A.5]{Singleton}: if there exists an EAQECC with parameters $[[n,k,d;c]]_q$ and $d \le (n+2)/2$, then $2d \le n +2 - k  +c$. Note that the previous Singleton type bound does not hold for all range of parameters of an EAQECC. In the light of this bound, the following defect has been used to evaluate the performance of EAQECCs \cite[Notation 3.7]{Singleton}.

\begin{definition}
Let $\mathcal{Q}$ be an $[[n, k, d; c]]_q$ EAQECC. Then the  {\em entanglement-assisted quantum Singleton defect of $\mathcal{Q}$} is the integer \begin{equation}n + 2 - k + c - 2d.\end{equation} 
\end{definition}

Note that the entanglement-assisted quantum Singleton defect is non-negative if $d \le (n+2)/2$, and otherwise it may be negative. 

\section{Reduced polynomials for Hermitian codes}
\label{Sect3}

In this section we recall some basic facts about the Hermitian curve, the classical codes derived from it, Hermitian codes, and some tools we use in subsequent computations. For a complete treatment see \cite{HLP}.

\subsection{Hermitian curves and codes}
\label{Sect3.1}

Let $\mathcal{X}$ be the Hermitian curve defined over $\mathbb{F}_{q^2}$ by the affine equation
$x^{q+1} = y^q + y$. This is a nonsingular plane curve of genus $g=q(q-1)/2$,  with $n = q^3$
rational affine points, $P_1, \ldots, P_n$, plus one point at infinity, $Q$.
Let $\nu_Q$ be the valuation of $\mathbb{F}_{q^2}(\mathcal{X})$ given by the order at $Q$ and $\nu=-\nu_Q$.
From the equation of $\mathcal{X}$ we have $\nu(x)=q$ and $\nu(y)=q+1$.  
The following property of $\nu$  will be widely used in what follows.

\begin{proposition}\cite[Theorem 2.16(iii)]{HLP}\label{Lemma_0}
Given functions $f_1,f_2\in\mathbb{F}_{q^2}(\mathcal{X})^*$, we have $\nu(f_1+f_2)\leq \max\{\nu(f_1),\nu(f_2)\}$
with equality if $\nu(f_1)\neq\nu(f_2)$.
\end{proposition}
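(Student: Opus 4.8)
The statement is the non-archimedean (ultrametric) inequality for the discrete valuation $\nu_Q$, rewritten in terms of pole orders through $\nu = -\nu_Q$, so the plan is to push everything down to the discrete valuation ring $\mathcal{O}_Q$ of $\mathcal{X}$ at $Q$ (a DVR because $\mathcal{X}$ is nonsingular). Fix a local uniformizer $t$ at $Q$, so that $\nu_Q$ is the $t$-adic order and every nonzero function can be written $f = t^{\nu_Q(f)}\,u = t^{-\nu(f)}\,u$ with $u$ a unit of $\mathcal{O}_Q$. After relabeling so that $\nu(f_1) \ge \nu(f_2)$, factor out the larger pole: $f_1 + f_2 = t^{-\nu(f_1)}\bigl(u_1 + t^{\,\nu(f_1)-\nu(f_2)}\,u_2\bigr)$. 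The exponent $\nu(f_1)-\nu(f_2)$ is now nonnegative, so the bracketed element lies in $\mathcal{O}_Q$ and has nonnegative order at $Q$; hence $\nu_Q(f_1+f_2) \ge -\nu(f_1)$, i.e.\ $\nu(f_1+f_2) \le \nu(f_1) = \max\{\nu(f_1),\nu(f_2)\}$, which is the inequality.

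For the equality clause I would run the same computation under the strict hypothesis $\nu(f_1) > \nu(f_2)$. Then $t^{\,\nu(f_1)-\nu(f_2)}\,u_2$ lies in the maximal ideal $\mathfrak{m}_Q$, so modulo $\mathfrak{m}_Q$ the bracket reduces to $u_1 \neq 0$; hence $u_1 + t^{\,\nu(f_1)-\nu(f_2)}\,u_2$ is itself a unit of $\mathcal{O}_Q$, has order $0$ at $Q$, and we conclude $\nu(f_1+f_2) = \nu(f_1)$. One last point: the statement formally permits $f_1 + f_2 = 0$, which is not in $\mathbb{F}_{q^2}(\mathcal{X})^*$; but then $f_1 = -f_2$, so $\nu(f_1) = \nu(f_2)$, no equality is being asserted, and the inequality holds trivially with the convention $\nu(0) = -\infty$.

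I do not expect a real obstacle here: the content is the standard DVR fact, and the only things that need care are keeping the sign convention $\nu = -\nu_Q$ consistent in the exponent bookkeeping and disposing of the degenerate sum. If one prefers not to invoke $\mathcal{O}_Q$ at all, the equality clause can instead be extracted from the first inequality by applying it to $f_1 = (f_1+f_2) + (-f_2)$: were $\nu(f_1) > \nu(f_2)$ while $\nu(f_1+f_2) < \nu(f_1)$, one would get $\nu(f_1) \le \max\{\nu(f_1+f_2),\nu(f_2)\} < \nu(f_1)$, a contradiction.
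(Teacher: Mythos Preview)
Your argument is correct and is the standard DVR proof of the strict triangle inequality, with the sign bookkeeping for $\nu=-\nu_Q$ handled carefully and the degenerate case $f_1+f_2=0$ addressed. The paper, however, does not give its own proof of this proposition at all: it simply cites \cite[Theorem 2.16(iii)]{HLP} and moves on, so there is nothing to compare approaches against.
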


For $m\geq 0$, we consider the Riemann-Roch space
\[
  \mathcal{L}(m) = \mathcal{L}(mQ) = \{f\in \mathbb{F}_{q^2}(\mathcal{X}) : f = 0 \text{ or  div}(f)\ge -mQ  \}.
\]
This is a linear space, whose dimension is denoted by $\ell(m)$. Recall that $\ell(m) = m+1-g$ when
$m\geq 2g-1$, according to the Riemann-Roch theorem~\cite{HLP}.
Let $\mathcal{L}(\infty) = \cup_{m=0}^\infty \mathcal{L}(m)$ and denote by $S$ the Weierstrass semigroup of $Q$, then $S=\{ \nu(f) : f\in \mathcal{L}(\infty), f\neq 0 \}= \langle q, q+1\rangle$.
Since $\mathcal{L}(\infty)$ is a finitely generated $\mathbb{F}_{q^2}$-algebra, we have that
$\mathcal{L}(\infty) = \mathbb{F}_{q^2}[x,y]$. Abusing the notation, the elements of $\mathcal{L}(\infty)$ will be called {\em polynomials}.

\begin{lemma} \label{Lemma_2}
If $f_1, \dots, f_t\in \mathcal{L}(\infty)$ are nonzero polynomials of pairwise different orders,
then they are linearly independent.
\end{lemma}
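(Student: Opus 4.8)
The plan is to argue by contradiction: suppose there is a nontrivial $\mathbb{F}_{q^2}$-linear relation $a_1 f_1 + \cdots + a_t f_t = 0$, and show that under the hypothesis on the orders the left-hand side actually has a well-defined finite order, hence cannot be zero.

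First I would reduce to the case in which every surviving coefficient is nonzero. Let $I = \{i : a_i \neq 0\}$ and put $g_i = a_i f_i$ for $i \in I$. Since $a_i \in \mathbb{F}_{q^2}^*$ we have $\nu(a_i) = 0$, so $\nu(g_i) = \nu(f_i)$, and the $g_i$ are again nonzero polynomials with pairwise distinct orders. Thus it suffices to prove the following claim: a sum of finitely many nonzero polynomials with pairwise distinct orders is itself nonzero.

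Next I would establish the claim by induction on $s = |I|$, proving the slightly stronger statement that $\nu\big(\sum_{i \in I} g_i\big) = \max_{i \in I} \nu(g_i)$, which in particular forces the sum to be nonzero. The base case $s = 1$ is immediate. For $s \geq 2$, fix any $j \in I$ and apply the inductive hypothesis to $h = \sum_{i \in I \setminus \{j\}} g_i$, obtaining $h \neq 0$ and $\nu(h) = \max_{i \neq j} \nu(g_i)$. Since $\nu(h) = \nu(g_{i_0})$ for some $i_0 \neq j$ and $\nu(g_{i_0}) \neq \nu(g_j)$ by hypothesis, we get $\nu(h) \neq \nu(g_j)$, so the equality clause of Proposition~\ref{Lemma_0} gives $\nu(h + g_j) = \max\{\nu(h), \nu(g_j)\} = \max_{i \in I} \nu(g_i)$; in particular $h + g_j = \sum_{i \in I} g_i \neq 0$.

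Applying the claim to the relation above shows that $\sum_{i \in I} a_i f_i \neq 0$ whenever $I \neq \emptyset$, contradicting $\sum_i a_i f_i = 0$. Hence $I = \emptyset$, every $a_i$ vanishes, and $f_1, \dots, f_t$ are linearly independent. The only delicate point is ensuring, throughout the induction, that the partial sums remain nonzero so that the equality part of Proposition~\ref{Lemma_0}, not merely the inequality, is available; this is really a bookkeeping matter rather than a genuine obstacle.
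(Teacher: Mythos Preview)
Your proof is correct and follows essentially the same approach as the paper's: both argue that a nontrivial linear combination of the $f_i$ would have $\nu$ equal to the maximum of the $\nu(f_i)$ with nonzero coefficient, hence could not be zero. The paper compresses this into a single line, tacitly applying Proposition~\ref{Lemma_0} to an arbitrary finite sum, whereas you spell out the induction explicitly and take care that partial sums remain nonzero so the equality clause is available; your version is simply a more rigorous expansion of the same idea.
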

\begin{proof}
Suppose $\lambda_1 f_1 + \cdots + \lambda_t f_t = 0$, then Proposition \ref{Lemma_0} ensures that
$\nu(0) = \nu(\lambda_1 f_1 + \cdots + \lambda_t f_t) = \max\{\nu(f_i) : \lambda_i\neq 0\}$. Hence
$\lambda_1 = \cdots = \lambda_t = 0$.
\end{proof}

Consider the evaluation map $ev:\mathcal{L}(\infty) \rightarrow \mathbb{F}_{q^2}^n$,
$ev(f)=(f(P_1),\dots,f(P_n))$. The algebraic geometry code $\mathcal{C}(m)$ is defined as
$\mathcal{C}(m) = ev(\mathcal{L}(m))$.  Since $\mathcal{C}(n+2g-1) = \mathbb{F}_{q^2}^n$ -- see eg.~\cite{Stichtenoth} --
we can restrict to $0\leq m \leq n+2g-1$. In this range, we can consider the quantum code $\mathcal{Q}(m)$
over $\mathbb{C}^q$ obtained from $\mathcal{C}(m)$ by the construction described in Proposition \ref{Prep:WildeHerm}. The code $\mathcal{Q}(m)$ has parameters $[[n,2k-n+c,\ge d;c]]_q$, where $n=q^3,k=k(m)$ and $d=d(m)$ are, respectively, the length,
dimension and minimum distance of $\mathcal{C}(m)$, and $c= n-k(m)-\Delta (\mathcal{C}(m))$ is the minimum number of maximally entangled quantum states 
consumed by the code. Recall that the dimension of $\mathcal{C}(m)$ is $k(m)=\ell(m)-\ell(m-n)$, and its exact minimum distance $d(m)$ has been determined in \cite{YK}.
Therefore, our main task is to compute $\Delta(m)=\Delta(\mathcal{C}(m))=\dim (\mathcal{C}(m)^q\cap \mathcal{C}(m)^{\perp})$, which is the only unknown value among the previous parameters. Please note that the dual code $\mathcal{C}(m)^{\perp}$ satisfies the well known relation $\mathcal{C}(m)^{\perp}=\mathcal{C}(m^{\perp})$, where $m^{\perp}=n+2g-2-m$, see \cite{HLP}.

\subsection{Reduced polynomials}
\label{Sect3.2}
A monomial $x^a y^b$ is called \emph{reduced} if $0\leq a<q^2$, $0\leq b<q$. Thus, the number of
reduced monomials is at most $n=q^3$. A polynomial $f\in\mathcal{L}(\infty)$ is \emph{reduced}  if $f=0$ or it is the sum of reduced monomials.  We denote by $\mathcal{R}$ the set of all reduced polynomials and
$\mathcal{R}^* = \mathcal{R}\setminus\{0\}$.  Consider the reduction
$\mathfrak{r}': \mathcal{L}(\infty) \rightarrow \mathcal{R}$ as follows. Given a polynomial $f$,
$\mathfrak{r}'(f)$ is obtained by performing the substitutions
\[
  \mbox{\bf (R1) } \mathfrak{r}'(y^q)= x^{q+1} - y \hspace*{5mm}  \mbox{ and }   \hspace*{5mm}  \mbox{\bf (R2) } \mathfrak{r}'(x^{q^2})= x
\]
as many times as possible. In other words, $\mathfrak{r}'(f)$ is the unique (up to multiplication by a constant $\lambda\in\mathbb{F}_{q^2}^*$) reduced polynomial in the coset of $f$ modulo the ideal $( x^{q+1}-y^q-y, x^{q^2} - x )\subset  \mathcal{L}(\infty)$. This comes from considering the ideal generated from the Hermitian equation, the field equations $x^{q^2}-x$ and $y^{q^2}-y$, and the lexicographical ordering with $y>x$.

The next lemma gives some properties of reduced polynomials that we shall use later.

\begin{lemma} \label{Lemma_1}
The following statements hold.\newline
(a) Two distinct reduced monomials  are linearly independent;\newline
(b) For any $f\in \mathcal{L}(\infty)$, we have $ev(f)=ev(\mathfrak{r}'(f))$;\newline
(c) The evaluation map $ev: \mathcal{R}\rightarrow\mathbb{F}_{q^2}^n$ is an isomorphism of vector spaces.
Thus, any code $\mathcal{C}\subseteq \mathbb{F}_{q^2}^n$ can be obtained by evaluating reduced polynomials.
\end{lemma}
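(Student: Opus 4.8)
The plan is to prove each of the three parts in order, using the valuation $\nu$ and the structure of the ideal defining reduced polynomials as the main tools.

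For part (a), I would observe that two distinct reduced monomials $x^{a_1}y^{b_1}$ and $x^{a_2}y^{b_2}$ with $0\le a_i<q^2$, $0\le b_i<q$ have distinct pole orders at $Q$. Indeed, $\nu(x^{a}y^{b})=aq+b(q+1)$, and the representation of an element of the semigroup $\langle q,q+1\rangle$ in the form $aq+b(q+1)$ with $0\le b<q$ (equivalently, $0\le b\le q-1$) is unique; since additionally $0\le a<q^2$, distinct pairs $(a_1,b_1)\ne(a_2,b_2)$ force $a_1q+b_1(q+1)\ne a_2q+b_2(q+1)$. Then Lemma \ref{Lemma_2} immediately gives linear independence.

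For part (b), the claim is that $ev(f)=ev(\mathfrak{r}'(f))$ for every $f\in\mathcal{L}(\infty)$. The substitutions (R1) and (R2) replace $f$ by a polynomial that differs from it by an element of the ideal $(x^{q+1}-y^q-y,\ x^{q^2}-x)$. Since each $P_j$ lies on the Hermitian curve, the first generator vanishes at every $P_j$; and since the first coordinate of each $P_j$ is an element of $\mathbb{F}_{q^2}$, the field equation $x^{q^2}-x$ also vanishes at every $P_j$. Hence any multiple of either generator evaluates to $\mathbf{0}$ under $ev$, so $ev$ is constant on cosets modulo this ideal, and in particular $ev(f)=ev(\mathfrak{r}'(f))$. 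I should also note why $\mathfrak{r}'(f)$ is well-defined and genuinely reduced: the rewriting rules strictly decrease a suitable monomial order (the lexicographic order with $y>x$ restricted to the relevant exponents), so the process terminates, and the terminal form has no monomial divisible by $y^q$ or by $x^{q^2}$, i.e.\ it is a sum of reduced monomials. Uniqueness up to a constant follows because the $n=q^3$ reduced monomials span a space of dimension exactly $n$ that maps injectively into $\mathbb{F}_{q^2}^n$ (which is really part (c)), so each coset contains a unique reduced representative.

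For part (c), the map $ev:\mathcal{R}\to\mathbb{F}_{q^2}^n$ is $\mathbb{F}_{q^2}$-linear with source and target both of dimension $n=q^3$: by part (a) the (at most $q^3$, hence exactly $q^3$) reduced monomials are linearly independent, so $\dim\mathcal{R}=q^3$. It therefore suffices to show $ev$ is injective, i.e.\ that no nonzero reduced polynomial vanishes at all of $P_1,\dots,P_n$. A clean way: if $f\in\mathcal{R}^*$ then $f\in\mathcal{L}(m)$ for $m=\nu(f)<q^3+q^2=\nu(x^{q^2-1}y^{q-1})+1$, and in fact the largest pole order of a reduced monomial is $(q^2-1)q+(q-1)(q+1)=q^3+q^2-q-1<q^3=n$; hence $f\in\mathcal{L}(m)$ with $m<n$, so $ev(f)=0$ would force $f\in\mathcal{L}(m-\text{(something)})$ — more precisely, a function in $\mathcal{L}(m)$ with $m<n$ that vanishes at all $n$ rational points must be $0$, since its divisor of zeros would have degree $\ge n>m\ge\deg(\text{pole divisor})$. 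Thus $\ker(ev)=0$, $ev$ is an isomorphism, and consequently every linear code $\mathcal{C}\subseteq\mathbb{F}_{q^2}^n$ equals $ev$ of a unique subspace of $\mathcal{R}$.

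The main obstacle is the bookkeeping in part (c): one must be careful that the maximal pole order among reduced monomials is strictly less than $n=q^3$ so that the zero-divisor degree argument applies, and one should double-check the count that there are exactly $q^3$ reduced monomials and that they are independent (part (a)) so the dimension comparison is tight. The termination/uniqueness discussion in part (b) is routine Gröbner-basis reasoning but should be stated carefully enough that "$\mathfrak{r}'(f)$ is well-defined" is justified rather than assumed.
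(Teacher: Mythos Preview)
Your arguments for (a) and (b) are correct and essentially match the paper's proof.

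For (c), however, there is a genuine arithmetic error that breaks your injectivity argument. You claim that the largest pole order of a reduced monomial is
\[
(q^2-1)q+(q-1)(q+1)=q^3+q^2-q-1<q^3=n,
\]
but this inequality is false: $q^3+q^2-q-1-q^3=q^2-q-1>0$ for every $q\ge 2$. In fact $q^3+q^2-q-1=n+2g-1$, so a nonzero reduced polynomial can have pole order as large as $n+2g-1>n$, and the ``degree of the zero divisor exceeds degree of the pole divisor'' argument does not apply. Indeed $ev:\mathcal{L}(n+2g-1)\to\mathbb{F}_{q^2}^n$ has a $g$-dimensional kernel, so showing that no nonzero \emph{reduced} polynomial lies in that kernel requires an additional argument, not just a degree count.

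The paper sidesteps this by proving \emph{surjectivity} rather than injectivity: since $\mathcal{C}(n+2g-1)=\mathbb{F}_{q^2}^n$ (a standard fact cited earlier), and by part (b) every evaluation of a function in $\mathcal{L}(\infty)$ equals the evaluation of its reduced form, the map $ev:\mathcal{R}\to\mathbb{F}_{q^2}^n$ is surjective. Combined with $\dim\mathcal{R}\le n$, this gives the isomorphism. If you want to keep an injectivity-based argument, one clean fix is combinatorial: for each $\alpha\in\mathbb{F}_{q^2}$ there are exactly $q$ points $P_j$ with $x$-coordinate $\alpha$, so $f(\alpha,y)$ is a polynomial in $y$ of degree $<q$ with $q$ roots, hence identically zero; then each $x$-coefficient is a polynomial of degree $<q^2$ with $q^2$ roots, hence zero.
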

\begin{proof}
(a) follows from the fact that $\nu(x^a y^b)=aq+b(q+1)$ and Lemma \ref{Lemma_2}. (b) is a consequence of $y^q= x^{q+1} - y$ in $ \mathcal{L}(\infty)$ and
$ev(x^{q^2})=ev(x)$. For (c), note that the order of a reduced polynomial is at most $q^3+q^2-q-1=n+2g-1$ and that $ev:\mathcal{L}(q^3+q^2-q-1)\rightarrow \mathbb{F}_{q^2}^n$ is surjective. From this, it follows that $ev: \mathcal{R}\rightarrow\mathbb{F}_{q^2}^n$ must be surjective as well.  Since the dimension of $\mathcal{R}$ is at most $n$, 
we get the isomorphism.
\end{proof}

In view of property (a) in Lemma \ref{Lemma_1}, any nonzero reduced polynomial has a unique leading monomial with respect to $\nu$. We define the normalization of $f\in\mathcal{R}$ as
$\mathfrak{n}(f)=0$ if $f=0$, and $\mathfrak{n}(f)=\lambda f$, where $\lambda\in\mathbb{F}_{q^2}$ is chosen
such that  $\mathfrak{n}(f)$ has leading coefficient equal to 1 (with respect to $\nu$), if $f\in\mathcal{R}^*$.
The normalized reduction of  $f\in \mathcal{L}(\infty)$ will be $\mathfrak{r}(f)=\mathfrak{n}(\mathfrak{r}'(f))$.

Let $\mathcal{M}=\{ 1,x,y,\dots\}=\{ f_1,f_2,\dots,f_n\}$ be the set of reduced monomials ordered
according to the values of $\nu$ and let  $\mathcal{M}_m=\{ f_1,\dots,f_{\ell(m)}\}$ for  $0\le m \le q^3+q^2-q-1$.  From Lemma \ref{Lemma_1} it holds that
$\{  ev(f_i) : f_i \in\mathcal{M}\}$ is a basis of $\mathcal{C}(q^3+q^2-q-1) \cong \mathbb{F}_{q^2}^n$ and
$\{  ev(f_i) : f_i \in\mathcal{M}_m\}$ is a basis of $\mathcal{C}(m)$.
In what follows, we deal with codes in terms of reduced polynomials.

\begin{example}
As mentioned before, the Hermitian curve over $\mathbb{F}_{q^2}$ has $q^3$ rational affine points and one point at infinity, which is also rational. The genus of the curve is $g = q(q-1)/2$.
    A basis for $\mathcal{L}(m)$ is given by the set $\{x^i y^j| i\geq 0, 0\leq j\leq q-1, iq+j(q+1)\leq m\}$. Now, assume that $q = 3$ and $m = 22$. 
    Then,
    \begin{equation*}
    B = \{1, x, y, x^2, xy, y^2, x^3, x^2y, xy^2, x^4, x^3y, x^2y^2, x^5, x^4y, x^3y^2,x^6, x^5y,x^4y^2, x^7, x^6y\}
  \end{equation*}
  is a basis for $\mathcal{L}(m)$. Consider $B^q = \{f^q\colon f\in B\}$. Since $q = 3$, the reductions are given by $\mathfrak{r}'(x^9) = x$ and $\mathfrak{r}'(y^3) = x^{4}-y$. Thus, the normalized reduction $\mathfrak{r}$ on $B^q$ gives the polynomials listed in Table~\ref{tab:reductions}.
    \begin{table}[h]
      \centering
      \setlength{\tabcolsep}{12pt}
      \begin{tabular}{rccc}
        $i$  & $f_i$    & $\mathfrak{r}'(f_i^q)$                   & $\mathfrak{r}(f_i^q)=\mathfrak{n}(\mathfrak{r}(f_i^q))$\\
        \hline
        $1$  & $1$      & $1$                           & $1$                           \\
        $2$  & $x$      & $x^3$                         & $x^3$                         \\
        $3$  & $y$      & $x^4$ $+$ $2y$                & $x^4$ $+$ $2y$                \\
        $4$  & $x^2$    & $x^6$                         & $x^6$                         \\
        $5$  & $xy$     & $x^7$ $+$ $2x^3y$             & $x^7$ $+$ $2x^3y$             \\
        $6$  & $y^2$    & $x^8$ $+$ $x^4y$ $+$ $y^2$    & $x^8$ $+$ $x^4y$ $+$ $y^2$    \\
        $7$  & $x^3$    & $x$                           & $x$                           \\
        $8$  & $x^2y$   & $2x^6y$ $+$ $x^2$             & $x^6y$ $+$ $2x^2$             \\
        $9$  & $xy^2$   & $x^7y$ $+$ $x^3y^2$ $+$ $x^3$ & $x^7y$ $+$ $x^3y^2$ $+$ $x^3$ \\
        $10$ & $x^4$    & $x^4$                         & $x^4$                         \\
        $11$ & $x^3y$   & $x^5$ $+$ $2xy$               & $x^5$ $+$ $2xy$               \\
        $12$ & $x^2y^2$ & $x^6y^2$ $+$ $x^6$ $+$ $x^2y$ & $x^6y^2$ $+$ $x^6$ $+$ $x^2y$ \\
        $13$ & $x^5$    & $x^7$                         & $x^7$                         \\
        $14$ & $x^4y$   & $x^8$ $+$ $2x^4y$             & $x^8$ $+$ $2x^4y$             \\
        $15$ & $x^3y^2$ & $x^5y$ $+$ $xy^2$ $+$ $x$     & $x^5y$ $+$ $xy^2$ $+$ $x$     \\
        $16$ & $x^6$    & $x^2$                         & $x^2$                         \\
        $17$ & $x^5y$   & $2x^7y$ $+$ $x^3$             & $x^7y$ $+$ $2x^3$             \\
        $18$ & $x^4y^2$ & $x^8y$ $+$ $x^4y^2$ $+$ $x^4$ & $x^8y$ $+$ $x^4y^2$ $+$ $x^4$ \\
        $19$ & $x^7$    & $x^5$                         & $x^5$                         \\
        $20$ & $x^6y$   & $x^6$ $+$ $2x^2y$             & $x^6$ $+$ $2x^2y$             \\
        \hline
      \end{tabular}
      \caption{Normalized reductions for $q=3$ and $m=22$}
      \label{tab:reductions}
    \end{table}
    Lastly, one can check that $\langle ev(B^q)\rangle = \langle ev(\mathfrak{r}(B^q))\rangle $.
\end{example}

\section{The parameters of EAQECCs from Hermitian codes}
\label{Sect4}

In this section we shall compute the parameters of the EAQECC $\mathcal{Q}(m)$ obtained  from the Hermitian code $\mathcal{C}(m)$. Let us remember that this problem leads us to the computation of
$\Delta(m)=\dim(\mathcal{C}(m)^{\perp_H} \cap \mathcal{C}(m))= \dim(\mathcal{C}(m)^{q} \cap \mathcal{C}(m^{\perp}))$, where $m^{\perp}=n+2g-2-m$.
In some cases, this number is easy to determine, see \cite{SK}.

\begin{proposition}  \label{Proposition_4}
If $m\leq q^2 - 2$, then $\Delta(m) = \ell(m)$.
\end{proposition}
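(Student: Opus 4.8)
The plan is to prove that the hypothesis $m\le q^2-2$ forces $\mathcal{C}(m)$ to be Hermitian self-orthogonal, i.e. $\mathcal{C}(m)^q\subseteq \mathcal{C}(m)^{\perp}=\mathcal{C}(m^{\perp})$ with $m^{\perp}=n+2g-2-m$. Granting this, Proposition~\ref{propDelta(V)}(b) gives $\Delta(m)=\dim(\mathcal{C}(m)^q\cap\mathcal{C}(m^{\perp}))=\dim\mathcal{C}(m)^q=\dim\mathcal{C}(m)$, and since $m\le q^2-2<q^3=n$ we have $\dim\mathcal{C}(m)=\ell(m)-\ell(m-n)=\ell(m)$, which is the claim. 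So everything reduces to the inclusion, and since $\mathcal{C}(m)^q$ is spanned by the vectors $ev(f)^q=ev(f^q)$ as $f$ runs over the reduced monomials in $\mathcal{M}_m$, it suffices to check $ev(f^q)\in\mathcal{C}(m^{\perp})$ for each reduced monomial $f=x^ay^b$ with $0\le b\le q-1$ and $\nu(f)=aq+b(q+1)\le m$.

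The core computation I would carry out is the following. Using the curve equation in the form $y^q=x^{q+1}-y$, one has, inside $\mathbb{F}_{q^2}(\mathcal{X})=\mathcal{L}(\infty)$, the identity $f^q=x^{aq}y^{bq}=x^{aq}(x^{q+1}-y)^b$, whose binomial expansion is a linear combination of the monomials $x^{aq+(q+1)j}y^{b-j}$ for $0\le j\le b$. The point where the hypothesis enters is that each such monomial is \emph{reduced}: its $y$-degree is $b-j\le b\le q-1<q$, and its $x$-degree is at most $aq+(q+1)b=\nu(f)\le m\le q^2-2<q^2$. Hence $f^q$ is itself a reduced polynomial, so its reduction $\mathfrak{r}'(f^q)$ never invokes rule \textbf{(R2)} and therefore preserves the pole order; combined with $\nu(f^q)=q\,\nu(f)$ this yields $\nu(\mathfrak{r}'(f^q))=q\,\nu(f)\le qm$. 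By Lemma~\ref{Lemma_1}(b), $ev(f^q)=ev(\mathfrak{r}'(f^q))$, and since $\mathfrak{r}'(f^q)\in\mathcal{L}(qm)$ we conclude $ev(f^q)\in\mathcal{C}(qm)$.

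It then remains only to compare degrees: from $m\le q^2-2$ we get $(q+1)m\le(q+1)(q^2-2)=q^3+q^2-2q-2\le q^3+q^2-q-2=n+2g-2$, hence $qm\le n+2g-2-m=m^{\perp}$ and $\mathcal{C}(qm)\subseteq\mathcal{C}(m^{\perp})$, completing the inclusion $\mathcal{C}(m)^q\subseteq\mathcal{C}(m^{\perp})$ and hence the proposition. The only genuinely delicate point is controlling the pole order $\nu(\mathfrak{r}'(f^q))$ of the reduced form of $f^q$: a priori the substitution $x^{q^2}\mapsto x$ could lower it unpredictably, and the whole content of the bound $m\le q^2-2$ is precisely that this substitution is never triggered, so that $\nu(\mathfrak{r}'(f^q))$ stays equal to $q\,\nu(f)$ and remains below $m^{\perp}$.
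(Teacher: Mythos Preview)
Your proof is correct and uses the same core idea as the paper: establish $\mathcal{C}(m)^q\subseteq\mathcal{C}(qm)\subseteq\mathcal{C}(m^\perp)$ via the inequality $qm\le m^\perp$, then conclude $\Delta(m)=\dim\mathcal{C}(m)^q=\dim\mathcal{C}(m)=\ell(m)$. The only difference is that your first inclusion is argued more elaborately than necessary. The paper simply notes $\mathcal{L}(m)^q\subseteq\mathcal{L}(qm)$, which is immediate from the valuation identity $\nu(f^q)=q\,\nu(f)$; evaluating then gives $ev(f^q)\in\mathcal{C}(qm)$ directly. There is no need to expand $f^q$ via the curve equation, verify that the resulting monomials are reduced, or worry about whether rule \textbf{(R2)} is triggered: the function $f^q$ already sits in $\mathcal{L}(qm)$ \emph{before} any reduction is applied, and that is all that is required. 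Your detour does produce the extra (true) fact that $f^q$ is already reduced when $m\le q^2-2$, but this plays no role in the proposition.
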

\begin{proof}
Note that $\mathcal{L}(m)^q\subseteq \mathcal{L}(qm)$, where $\mathcal{L}(m)^q = \{f^q :  f\in\mathcal{L}(m)\}$.
If $m\leq q^2-2$, then $qm\leq m^\perp$ and thus $\mathcal{C}(m)^q\subseteq \mathcal{C}(m^\perp)$, so $\Delta(m)= \dim(\mathcal{C}(m)^q)=\dim(\mathcal{C}(m))$.
\end{proof}

For $m< q^3$ the evaluation map $ev:\mathcal{L}(m)^q\rightarrow \mathbb{F}_{q^2}^n$ is injective since $ev(f^q)=ev(f)^q$ and $ev(\mathcal{L}(m))$ is one-to-one. For $m\geq q^3$ this is no longer true. Thus, in this case we cannot expect a formula for $\Delta(m)$ in terms of
$\ell(m)$, as in Proposition~\ref{Proposition_4}. In the sequel, we develop a procedure that performs such a computation.
We can restrict to $m\geq q^2 -1$ since Proposition~\ref{Proposition_4} covers the remaining cases.
On the other hand, from Proposition \ref{propDelta(V)}(c), we can also assume $m\le m^{\perp}$, that is,
we can restrict to $m\le m^*=\lfloor n/2+g-1\rfloor$ \cite{Tiersma}. 
We  write $\mathcal{M}_*=\mathcal{M}_{m^*}$.

Let $m\le m^*$. The code $\mathcal{C}(m)^q$ can be obtained by
\[
\mathcal{C}(m)^q = ev(\langle f_1^q, \ldots, f_\ell^q\rangle) = ev(\langle \mathfrak{r}(f_1^q), \ldots, \mathfrak{r}(f_\ell^q)\rangle),
\]
where $\ell=\ell (m)$ (since $m<q^3=n$). Notice that the functions $\mathfrak{r}(f_1^q), \ldots, \mathfrak{r}(f_\ell^q)$ are linearly independent, as
$\dim(\mathcal{C}(m)^q)=\dim(\mathcal{C}(m))$. In general, however, they do not have pairwise different orders.
We shall construct a new set of functions $\phi_1,\dots,\phi_\ell$ such that
$\langle \phi_1, \ldots, \phi_\ell\rangle = \langle \mathfrak{r}(f_1^q), \ldots, \mathfrak{r}(f_\ell^q)\rangle$
and all $\phi_i$'s have different orders. To that end, we define a second reduction on the $\mathfrak{r}(f_i^q)$'s, denoted by $\mathfrak{s}$ and described iteratively via the rule
\[
  \mathfrak{s}(\mathfrak{r}(f_1^q)):=1
  \quad\text{and}\quad
  \mathfrak{s}(\mathfrak{r}(f_i^q)) := \mathfrak{r}(f_i^q)  \bmod \big\langle \mathfrak{s}(\mathfrak{r}(f_1^q)), \dots, \mathfrak{s}(\mathfrak{r}(f_{i-1}^q))\big\rangle, \; i=2,\dots,\ell .
\]
We then set $\phi_i=\mathfrak{s}(\mathfrak{r}(f_i^q))$ for $ i=1,\dots,\ell $.

This reduction may be described step-by-step as follows.
Define $\phi_1 := 1$. Since $f_1=1$, it is obvious that
$ \langle\phi_1\rangle=\langle\mathfrak{r}(f_1^q)\rangle$. Once normalized reductions
$\phi_1, \ldots, \phi_{t-1}$ such that  $\langle\phi_1, \dots, \phi_{t-1}\rangle=\langle \mathfrak{r}(f_1^q), \dots,  \mathfrak{r}(f_{t-1}^q) \rangle $ are computed, set $\phi =  \mathfrak{r}(f_t^q)$ and do:
\begin{description}
\item[(S1)] If $\nu(\phi) \neq \nu(\phi_i)$ for all  $i<t$, then set $\phi_t \leftarrow  \phi$. End.
\item[(S2)] If $\nu(\phi) = \nu(\phi_i)$ for some  $i<t$, then set $\phi \leftarrow \mathfrak{n}(\phi - \phi_i)$.
Repeat from (S1).
\end{description}
Note that when $\nu(\phi) = \nu(\phi_i)$ for some $i<t$, then we have  $\langle\phi_1, \dots, \phi_{t-1},\mathfrak{n}(\phi - \phi_i)\rangle=\langle\phi_1, \dots, \phi_{t-1},\phi \rangle$. Then, by the initial choice  $\phi =  \mathfrak{r}(f_t^q)$  and the induction hypothesis $\langle\phi_1, \dots, \phi_{t-1}\rangle=\langle \mathfrak{r}(f_1^q), \dots,  \mathfrak{r}(f_{t-1}^q) \rangle $, in each step of the previous procedure it holds that 
$\langle\phi_1, \dots, \phi_{t-1},\phi\rangle=\langle \mathfrak{r}(f_1^q), \dots,  \mathfrak{r}(f_t^q) \rangle$. In particular, the  functions $\mathfrak{r}(f_1^q), \ldots, \mathfrak{r}(f_\ell^q)$ being linearly independent, we deduce that $\phi\neq 0$. Note also that, in the case  $\nu(\phi) = \nu(\phi_i)$ for some $i<t$,
the leading terms of $\phi$ and $\phi_i$ coincide  by  Lemma~\ref{Lemma_1}. Hence, $\nu(\phi)$ decreases in each iteration, and after at most $t$ such
iterations, we obtain $\phi_t$.\footnote{We are showing in the following section that we can reduce the number of iterations. It is achieved by a careful analysis of the reduced monomials created in the present process.}
Furthermore, the $\phi_i$'s are reduced since they are linear combinations of reduced polynomials.
Thus, finally we obtain reduced functions  $\phi_1, \ldots, \phi_{t}$, of  pairwise different orders, such that
$\langle\phi_1, \ldots, \phi_{t} \rangle) = \langle \mathfrak{r}(f_1^q), \dots,  \mathfrak{r}(f_t^q) \rangle$.

Define the set $\Phi(m) = \{\phi_i  : 1\leq i\leq \ell(m) \}$. The above properties can be summarized as follows:

\begin{proposition} \label{Lemma_5}
$\Phi(m)$ is a set of linearly independent reduced polynomials with pairwise different orders and
 $\mathcal{C}(m)^q = ev(\langle\Phi(m)\rangle)$.
\end{proposition}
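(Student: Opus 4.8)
The plan is to organize the construction that precedes the statement into a single induction on $t$, showing for every $1 \le t \le \ell(m)$ that $\phi_1, \dots, \phi_t$ are nonzero reduced polynomials of pairwise distinct orders with $\langle \phi_1, \dots, \phi_t\rangle = \langle \mathfrak{r}(f_1^q), \dots, \mathfrak{r}(f_t^q)\rangle$; the case $t = \ell(m)$, followed by one application of $ev$, then yields the proposition. The base case $t = 1$ is immediate: $\phi_1 = 1$ is a nonzero reduced polynomial, the order condition is vacuous, and since $f_1 = 1$ we have $\mathfrak{r}(f_1^q) = 1$, so $\langle \phi_1\rangle = \langle \mathfrak{r}(f_1^q)\rangle$.

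For the inductive step I would fix $t \ge 2$, assume the claim for $t-1$, and analyze the loop (S1)--(S2) starting from $\phi = \mathfrak{r}(f_t^q)$. Two points require attention. First, termination: each time (S2) is executed, $\nu(\phi) = \nu(\phi_i)$ for some $i < t$, so by Lemma \ref{Lemma_1}(a) the unique leading monomials of $\phi$ and $\phi_i$ agree, and after the normalization $\mathfrak{n}$ their leading coefficients agree as well; hence the leading terms cancel in $\phi - \phi_i$, and Proposition \ref{Lemma_0} gives $\nu(\mathfrak{n}(\phi - \phi_i)) < \nu(\phi)$. Since $\nu$ takes values in the finite set of orders of reduced monomials (bounded by $n + 2g - 1$), $\nu(\phi)$ cannot decrease forever, so the loop stops after finitely many passes, at which moment $\nu(\phi)$ differs from every $\nu(\phi_i)$ with $i < t$ and we set $\phi_t = \phi$. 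Second, invariance of the span: every pass replaces $\phi$ by $\mathfrak{n}(\phi - \phi_i)$ with $i < t$, an invertible modification of $\phi$ modulo $\langle \phi_1, \dots, \phi_{t-1}\rangle$, so together with the induction hypothesis one gets $\langle \phi_1, \dots, \phi_{t-1}, \phi\rangle = \langle \mathfrak{r}(f_1^q), \dots, \mathfrak{r}(f_t^q)\rangle$ at every pass, in particular for the terminal $\phi = \phi_t$. The family $\mathfrak{r}(f_1^q), \dots, \mathfrak{r}(f_{\ell(m)}^q)$ is linearly independent (as already observed, because $\dim \mathcal{C}(m)^q = \dim \mathcal{C}(m)$), hence so is any initial segment, and the right-hand side has dimension $t$; as $\phi_1, \dots, \phi_{t-1}$ already span a subspace of dimension $t-1$, it follows that $\phi_t \neq 0$. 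Finally, each $\phi_i$ is an $\mathbb{F}_{q^2}$-linear combination of the reduced polynomials $\mathfrak{r}(f_j^q)$ and therefore lies in the subspace $\mathcal{R}$, i.e. it is reduced. This closes the induction.

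It then only remains to read off the two assertions from the case $t = \ell(m)$. The elements of $\Phi(m)$ are nonzero reduced polynomials of pairwise distinct orders, so Lemma \ref{Lemma_2} gives their linear independence. For the code identity, $\langle \Phi(m)\rangle = \langle \mathfrak{r}(f_1^q), \dots, \mathfrak{r}(f_{\ell(m)}^q)\rangle$ by the induction, and $ev(\mathfrak{r}(f_i^q)) = ev(f_i^q)$ by Lemma \ref{Lemma_1}(b), whence $ev(\langle \Phi(m)\rangle) = ev(\langle f_1^q, \dots, f_{\ell(m)}^q\rangle) = \mathcal{C}(m)^q$ by the description of $\mathcal{C}(m)^q$ recalled just before the statement. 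I do not expect a genuine obstacle here: essentially all the content is the bookkeeping for the loop, and the single point that needs a little care is certifying termination, which is exactly where Lemma \ref{Lemma_1}(a) and Proposition \ref{Lemma_0} are used to force $\nu(\phi)$ strictly down at every pass.
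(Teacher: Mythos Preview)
Your proof is correct and follows essentially the same route as the paper. The paper does not give a separate proof of this proposition; instead it establishes all of the claimed properties in the paragraph of running construction immediately preceding the statement, using exactly the ingredients you use (span invariance under (S2), the strict drop of $\nu(\phi)$ via the coincidence of leading monomials from Lemma~\ref{Lemma_1}, nonvanishing of $\phi$ from the linear independence of the $\mathfrak{r}(f_i^q)$'s, and reducedness as a linear combination of reduced polynomials), and then simply says ``the above properties can be summarized as follows.'' Your write-up just packages this discussion into an explicit induction on $t$ and adds the final appeal to Lemma~\ref{Lemma_2} and Lemma~\ref{Lemma_1}(b), which is a clean way to close.
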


The set $\Phi(m)$ provides an efficient way to compute $\Delta(m)$. 

\begin{lemma} \label{Lemma_6}
Let $m, m'$ be two nonnegative integers with $m \leq m^*$. The set
$\{\phi\in\langle\Phi(m)\rangle : \nu(\phi)\leq m'\}$ is a linear space whose basis is
$\Phi(m,m')=\{\phi_i\in\Phi(m) : \nu(\phi_i)\leq m'\}$.
\end{lemma}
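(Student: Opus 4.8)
The plan is to exhibit the set directly and reduce everything to the two facts already proved about $\Phi(m)$ in Proposition~\ref{Lemma_5}: the polynomials $\phi_1,\dots,\phi_{\ell(m)}$ are linearly independent and have pairwise distinct orders with respect to $\nu$. First I would observe that $\{\phi\in\langle\Phi(m)\rangle : \nu(\phi)\le m'\}$ is a linear subspace of $\langle\Phi(m)\rangle$: it is closed under scalar multiplication trivially (and $0$ is included by convention $\nu(0)=-\infty$), and it is closed under addition because, for $\phi,\psi$ in the set, Proposition~\ref{Lemma_0} gives $\nu(\phi+\psi)\le\max\{\nu(\phi),\nu(\psi)\}\le m'$. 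So the only real content is the identification of the basis.

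Next I would show $\Phi(m,m')\subseteq\{\phi\in\langle\Phi(m)\rangle:\nu(\phi)\le m'\}$ and that $\Phi(m,m')$ is linearly independent; both are immediate, the first by definition of $\Phi(m,m')$ and the second because $\Phi(m,m')\subseteq\Phi(m)$, which is linearly independent. The main step is spanning: given any nonzero $\phi=\sum_{i\in I}\lambda_i\phi_i$ with all $\lambda_i\neq 0$ and $\nu(\phi)\le m'$, I must show every $\phi_i$ appearing (i.e.\ every $i\in I$) satisfies $\nu(\phi_i)\le m'$, so that $\phi\in\langle\Phi(m,m')\rangle$. Here I would invoke the key property that the $\phi_i$ have pairwise distinct orders: by the equality case of Proposition~\ref{Lemma_0} (applied inductively, as in Lemma~\ref{Lemma_2}), $\nu\!\left(\sum_{i\in I}\lambda_i\phi_i\right)=\max_{i\in I}\nu(\phi_i)$. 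Hence $\max_{i\in I}\nu(\phi_i)=\nu(\phi)\le m'$, which forces $\nu(\phi_i)\le m'$ for each $i\in I$, as desired.

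The one point requiring a word of care — and the place where the distinct-orders hypothesis is genuinely used — is exactly the equality $\nu(\sum\lambda_i\phi_i)=\max_i\nu(\phi_i)$: without pairwise distinct orders one would only get $\le$, and the argument would collapse. Everything else is routine. I would therefore spend the bulk of the write-up justifying that step (citing Lemma~\ref{Lemma_2}, whose proof is the same computation) and then conclude that $\Phi(m,m')$ is a spanning, linearly independent subset, hence a basis.
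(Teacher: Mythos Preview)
Your proposal is correct and follows essentially the same approach as the paper: the paper's proof simply cites Proposition~\ref{Lemma_0} and Lemma~\ref{Lemma_2} together with the fact that the elements of $\Phi(m)$ have pairwise distinct orders, and your write-up spells out precisely how those ingredients combine (closure under addition via the ultrametric inequality, and spanning via the equality $\nu(\sum_i\lambda_i\phi_i)=\max_i\nu(\phi_i)$ when the orders are distinct). There is nothing to add.
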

\begin{proof}
It is a consequence of Proposition \ref{Lemma_0} and Lemma \ref{Lemma_2}, taking into account
that all functions of $\Phi(m)$ have different orders.
\end{proof}

\begin{proposition} \label{Proposition_7}
Let $0\leq m \leq m^*$. We have $\Delta(m) = \#\{\phi_i\in\Phi(m) : \nu(\phi_i)\leq m^\perp\} = \#\Phi(m,m^\perp)$.
\end{proposition}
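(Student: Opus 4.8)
The plan is to transport the computation of $\Delta(m)$ to the world of reduced polynomials, where the evaluation map is an isomorphism (Lemma~\ref{Lemma_1}(c)), and then read off the answer directly from Lemma~\ref{Lemma_6}. First, by Proposition~\ref{propDelta(V)}(b) together with the duality $\mathcal{C}(m)^{\perp}=\mathcal{C}(m^{\perp})$, we have $\Delta(m)=\dim\big(\mathcal{C}(m)^{q}\cap\mathcal{C}(m^{\perp})\big)$, and Proposition~\ref{Lemma_5} already presents the first factor as $\mathcal{C}(m)^{q}=ev(\langle\Phi(m)\rangle)$, with $\langle\Phi(m)\rangle$ a subspace of $\mathcal{R}$.

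The next step is to describe the second factor in the same terms: I would prove that $\mathcal{C}(m^{\perp})=ev(\mathcal{R}_{m^{\perp}})$, where $\mathcal{R}_{m^{\perp}}$ denotes the span of all reduced monomials of pole order at most $m^{\perp}$. Since distinct reduced monomials have distinct $\nu$-values, this subspace coincides with $\{f\in\mathcal{R}:f=0\text{ or }\nu(f)\le m^{\perp}\}$. One inclusion is immediate, as each such monomial lies in $\mathcal{L}(m^{\perp})$; for the other, one observes that the substitutions (R1) and (R2) do not increase $\nu$, so that $\nu(\mathfrak{r}'(g))\le\nu(g)\le m^{\perp}$ for every $g\in\mathcal{L}(m^{\perp})$, and hence $ev(g)=ev(\mathfrak{r}'(g))\in ev(\mathcal{R}_{m^{\perp}})$ by Lemma~\ref{Lemma_1}(b).

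Now both codes are images under $ev$ of subspaces of $\mathcal{R}$, so using that $ev|_{\mathcal{R}}$ is a linear isomorphism, and therefore commutes with intersections of subspaces of $\mathcal{R}$, we get
\[
  \mathcal{C}(m)^{q}\cap\mathcal{C}(m^{\perp})=ev\big(\langle\Phi(m)\rangle\cap\mathcal{R}_{m^{\perp}}\big)=ev\big(\{\phi\in\langle\Phi(m)\rangle:\nu(\phi)\le m^{\perp}\}\big).
\]
Applying Lemma~\ref{Lemma_6} with $m'=m^{\perp}$ (which is a nonnegative integer since $m\le m^{*}$), the space appearing inside has $\Phi(m,m^{\perp})$ as a basis. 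Taking dimensions, and using that $ev|_{\mathcal{R}}$ preserves them, yields $\Delta(m)=\dim\big(\langle\Phi(m)\rangle\cap\mathcal{R}_{m^{\perp}}\big)=\#\Phi(m,m^{\perp})$.

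The only place requiring genuine care is the reduced description of $\mathcal{C}(m^{\perp})$ in the second step. When $m\ge q^{2}-1$ one has $m^{\perp}\le q^{3}-q-1<q^{3}=n$, so $ev$ is injective on $\mathcal{L}(m^{\perp})$ and the claim is nothing more than the basis statement for Hermitian codes recorded in Section~\ref{Sect3.2}; and Proposition~\ref{Proposition_4} already disposes of the range $m<q^{2}-1$. If, however, one wants a single uniform argument for all $m\le m^{*}$, the subtle point is exactly that $m^{\perp}$ may exceed $q^{3}$ for small $m$, so that $ev$ is no longer injective on $\mathcal{L}(m^{\perp})$ and one cannot simply take the first $\ell(m^{\perp})$ reduced monomials as a basis; what rescues the description is precisely that the reductions (R1) and (R2) never raise the pole order at $Q$. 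Everything else is a routine assembly of Lemmas~\ref{Lemma_1} and~\ref{Lemma_6} with Propositions~\ref{propDelta(V)} and~\ref{Lemma_5}.
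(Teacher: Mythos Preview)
Your proof is correct and follows essentially the same route as the paper's own argument: both use that $ev|_{\mathcal{R}}$ is an isomorphism (Lemma~\ref{Lemma_1}(c)) to identify $\mathcal{C}(m)^{q}\cap\mathcal{C}(m^{\perp})$ with $\{\phi\in\langle\Phi(m)\rangle:\nu(\phi)\le m^{\perp}\}$ and then invoke Lemma~\ref{Lemma_6}. The paper does this element-by-element (pick $\mathbf{x}$ in the intersection, pull back two reduced preimages, conclude they coincide), whereas you phrase the same step at the level of subspaces; you are also more explicit than the paper about why $\mathcal{C}(m^{\perp})=ev(\mathcal{R}_{m^{\perp}})$ holds even for large $m^{\perp}$, supplying the observation that (R1) and (R2) never raise $\nu$, which the paper leaves implicit.
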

\begin{proof}
Let $\mathbf{x}\in\mathcal{C}(m)^q\cap\mathcal{C}(m^\perp)$. There exists a reduced polynomial
$f\in \mathcal{R}$ such that $\nu(f)\leq m^\perp$ and $ev(f) = \mathbf{x}$. Furthermore, by Proposition~\ref{Lemma_5},
there exists $\phi\in\langle\Phi(m)\rangle$ such that $\mathbf{x}=ev(\phi)$. Now, since both $f$ and $\phi$ are reduced, Lemma \ref{Lemma_1} gives $\phi = f$. Thus $\nu(\phi)\leq m^\perp$. Then the  number of vectors $\mathbf{x}$ in $\mathcal{C}(m)^q\cap\mathcal{C}(m^\perp)$ is exactly the number of $\phi$'s in $\mathcal{C}(m)^q$ such that $\nu(\phi)\leq m^{\perp}$.
The conclusion follows from Lemma~\ref{Lemma_6}.
\end{proof}

The previous arguments lead directly to an algorithmic way of computing $\Delta(m)$. Note that $-\nu$ is the valuation given by the point at infinity and $\mbox{LC}$ stands for leading coefficient (with respect to $\nu$).

\newcommand{\deccc}{\ensuremath{\mbox{\sc A basis for $\Delta(m)$ (for computing $c$)}}}
\begin{algorithm}[h!]
 \caption{\deccc}
 \algsetup{indent=2em}
 \begin{algorithmic}[1]\label{alg:dec1}
    \REQUIRE $m$ with  $q^2 -1 \le m \le m^*=\lfloor n/2+g-1\rfloor=\lfloor q^3/2+(q^2-q)/2-1\rfloor$;
                    $\mathcal{M}_m = \{f_1, \ldots, f_\ell \}$, where $\ell=m+1-(q^2-q)/2$.
    \ENSURE A basis for $\Delta(m)$.
    \medskip
    \FOR{$j=1, \ldots, \ell$}
         \STATE $\phi_j \gets f_j^q$;     
         \WHILE{$\deg_y (\phi_j)\ge q$ or $\deg_x (\phi_j) \ge q^2$}
             \STATE In $\phi_j$, substitute $y^q$ for $x^{q+1}-y$ and $x^{q^2}$ for $x$;          
         \ENDWHILE    
         \STATE $\phi_j \gets \phi_j/\mbox{LC}(\phi_j)$;
    \ENDFOR
    \FOR{$j=2, \ldots , \ell$}        
         \STATE $i=1$;
         \WHILE{$i<j$}
            \IF{$\nu(\phi_j) = \nu(\phi_i)$}
            	\STATE $\phi_j \gets \phi_j - \phi_i$;\label{alg:reduction1}
                \STATE $\phi_j \gets \phi_j/\mbox{LC}(\phi_j)$;\label{alg:reduction2}
                \STATE $i \gets 1$;
            \ELSE
            	\STATE $i \gets i + 1$;
            \ENDIF
         \ENDWHILE        
    \ENDFOR    
\STATE $\Delta(m)\gets\{\phi_1 , \ldots , \phi_\ell \}$;
\RETURN $\Delta(m)$;
\end{algorithmic}
\end{algorithm}

The next remark clarifies some properties of Algorithm~\ref{alg:dec1}.

\begin{remark}
(a) Note that all polynomials involved in the computations described above have coefficients in
$\mathbb{F}_p$, where $p$ is the characteristic of $\mathbb{F}_{q^2}$. Therefore, the algorithm runs over
$\mathbb{F}_p$, although both the Hermitian curve $\mathcal{X}$ and AG codes obtained from them
are defined over $\mathbb{F}_{q^2}$. \newline
(b) In view of Proposition \ref{propDelta(V)}(c), we have treated only the case  $m\le m^*$. For larger
values of $m$, one can use the identity $\Delta(m)=\Delta(m^{\perp})=\#\Phi(m^\perp,m)$. \newline
(c) For $m\ge n+2g-2-(q^2-2)=q^3-q$, we have $m^\perp\leq q^2-2$. By Proposition~\ref{propDelta(V)} and Proposition~\ref{Proposition_4} this implies $\Delta(m)=\Delta(m^\perp)=\ell(m^\perp)$. Hence, the EAQECC constructed from $\mathcal{C}(m)$ has entanglement $c=\ell(m^\perp)-\ell(m^\perp-n)-\Delta(m)=0$, meaning that it is a standard quantum code.
Similarly, if $m\le  q^2-2$ we have $\Delta(m)=\ell(m)$ and $\mathcal{C}(m)$ has dimension $k(m)=\ell(m)$. Thus, the resulting EAQECC has dimension $0$.
All these codes  can be discarded if one is only interested in EAQECCs with $c>0$ and $k>0$.
\end{remark}

\begin{example}
Let $q=3$. Here $g=3$ and  $m^*=15$. Algorithm~\ref{alg:dec1} gives the data shown in Table~\ref{tab:tabla1}.
\small
\begin{table}[h!]
\begin{center}
\begin{tabular}{cccccc} 
$f_i$    & $\nu(f_i)$ & $\mathfrak{r}(f_i^q)$ & $\nu(\mathfrak{r}(f_i^q))$ & $\phi_i$          & $\nu(\phi_i)$ \\ \hline
$1$      & 0          & $1$                   & 0                          & $1$               & 0   \\
$x$      & 3          & $x^3$                 & 9                          & $x^3$             & 9   \\
$y$      & 4          & $x^4+2y$              & 12                         & $x^4+2y$          & 12  \\
$x^2$    & 6          & $x^6$                 & 18                         & $x^6$             & 18  \\
$xy$     & 7          & $x^7+2x^3y$           & 21                         & $x^7+2x^3y$       & 21  \\
$y^2$    & 8          & $x^8+x^4y+y^2$        & 24                         & $x^8+x^4y+y^2$    & 24  \\
$x^3$    & 9          & $x$                   & 3                          & $x$               & 3   \\
$x^2y$   & 10         & $x^6y+2x^2$           & 22                         & $x^6y+2x^2$       & 22  \\
$xy^2$   & 11         & $x^7y+x^3y^2+x^3$     & 25                         & $x^7y+x^3y^2+x^3$ & 25  \\
$x^4$    & 12         & $x^4$                 & 12                          & $y$               & 4   \\
$x^3y$   & 13         & $x^5+2xy$             & 15                         & $x^5+2xy$         & 15  \\
$x^2y^2$ & 14         & $x^6y^2+x^6+x^2y$     & 26                         & $x^6y^2+x^6+x^2y$ & 26  \\
$x^5$    & 15         & $x^7$                 & 21                         & $x^3y$            & 13  \\
\hline
\end{tabular}
\caption{Algorithm~\ref{alg:dec1} for $q=3$.}
\label{tab:tabla1}  
\end{center}
\end{table}

In the first column we list the monomials in $\mathcal{M}_*$.
The second and sixth columns of this table allows us to compute $\Delta(m)$ for all values of $m$.
For example, $\Delta(10)=\#\Phi(10,21)=6$. If $m>15$ we apply the duality property; for example
$\Delta(21)=\#\Phi(21,10)=\#\Phi(10,21)=6$. 

We obtain quantum codes over $\mathbb{C}^3$ with parameters
$[[27, 1, 19; 16]]_3$, $[[27, 4, 16; 13]]_3$,  $[[27, 6, 13; 9]]_3$,  $[[27, 9, 10; 6]]_3$,  $[[27, 13, 7; 4]]_3$,
and $[[27, 16, 4; 1]]_3$. All previous codes have entanglement-assisted quantum Singleton defect equal to 6. An important feature of the first three examples above is 
that there is no (quantum) stabilizer code attaining the respective parameters.

Continuing the analysis for EAQECCs over $\mathbb{C}^4$, we derive the codes 
$[[64, 1, 49; 45]]_4$, $[[64, 5, 42; 35]]_4$, $[[64, 16, 30; 22]]_4$, 
$[[64, 24, 21; 12]]_4$, $[[64, 33, 14; 7]]_4$, $[[64, 35, 12; 3]]_4$, and\linebreak[4]
$[[64, 39, 8; 1]]_4$. They have entanglement-assisted quantum Singleton defect equal to
$12$, $12$, $12$, $12$, $12$, $10$, and $12$. 
Lastly, considering EAQECCs over $\mathbb{C}^5$, the codes have parameters equal to 
$[[125, 1, 101; 96]]_5$, $[[125, 9, 91; 84]]_5$, $[[125, 15, 81; 70]]_5$, $[[125, 36, 56; 41]]_5$, 
$[[125, 54, 41; 29]]_5$, $[[125, 70, 26; 15]]_5$, and $[[125, 90, 10; 1]]_5$, which have 
entanglement-assisted quantum Singleton quantum defect equal to 
$20, 20, 20, 20, 20, 20$, and $18$.  Regarding parameters attainability using stabilizer codes, we have that 
the first three and the first four EAQECCs over $\mathbb{C}^4$ and over $\mathbb{C}^5$, respectively, 
are unique. We summarize these results in Table \ref{tab:tabla2}, where GV stands for the Gilbert-Varshamov bound \cite{Galindo:2019}.

\begin{table}[h!]
\begin{center}
\begin{tabular}{ccccc} 
Parameters              & Singleton defect & Exceeding GV \\ \hline
$[[27, 1, 19; 16]]_3$   & 6                               & \checkmark                        \\
$[[27, 4, 16; 13]]_3$   & 6                               & \checkmark                        \\
$[[27, 13, 7; 4]]_3$    & 6                               & \checkmark                        \\
$[[27, 16, 4; 1]]_3$    & 6                               & \checkmark                        \\
$[[64, 5, 42; 35]]_4$   & 12                               & \checkmark                        \\
$[[64, 16, 30; 22]]_4$  & 12                               & \checkmark                        \\
$[[64, 35, 12; 3]]_4$   & 10                               & \checkmark                        \\
$[[64, 39, 8; 1]]_4$    & 12                               & \checkmark                        \\
$[[125, 1, 101; 96]]_5$ & 20                             & \checkmark                        \\
$[[125, 9, 91; 84]]_5$  & 20                             & \checkmark                        \\
$[[125, 36, 56; 41]]_5$ & 20                             & \checkmark                        \\
$[[125, 70, 26; 15]]_5$ & 20                             & \checkmark                        \\
 $[[125, 90, 10; 1]]_5$ & 18                               & \checkmark                        \\
\hline
\end{tabular}
\caption{Examples of code's parameters and comparative analysis by means of coding bounds.}\label{tab:tabla2}
\end{center}
\end{table}

\end{example}

\section{$q$-th powers of Hermitian codes}
\label{Sect5}

The computation of $\Delta(m)$ in the previous algorithm relies on the comparison of $\nu(f_i)$ and
$\nu(\phi_i)$ for all $f_i\in\mathcal{M}_*$.
The hardest part of this algorithm is the computation (and storage) of all reductions $\mathfrak{r}(f_i^q)$, $1\le i\le m^*$. 
However, in practice, the algorithm only requires the knowledge of $\mathfrak{r}(f_i^q)$ when $f_i$ is involved in some  reduction $\mathfrak{s}$, which does not happen for all the values of $i$ (see for example the case $ q = 3 $ in Table \ref{tab:tabla1}).  Otherwise (if $f_i$ is not involved in any reduction $\mathfrak{s}$), it is enough to know $\nu(\mathfrak{r}(f_i^q))$. In this section we show how this value can be determined directly, without computing $\mathfrak{r}(f_i^q)$ through {\bf (R1)} and  {\bf (R2)}.  
We start by giving a formula for the reduction  $\mathfrak{r}'(f_i^q)$ that allows us a fast computation of this data when necessary.

\subsection{Reducing $q$-th powers}
\label{Sect5.1}

Let us recall that the reduction $\mathfrak{r}'(f)$ of a polynomial $f\in \mathcal{L}(\infty)=\mathbb{F}_{q^2}[x,y]$  is obtained by performing the substitutions
\[
  \mbox{\bf (R1) } \mathfrak{r}'(y^q)= x^{q+1} - y \hspace*{5mm}  \mbox{ and }   \hspace*{5mm}  \mbox{\bf (R2) } \mathfrak{r}'(x^{q^2})= x
\]
as many times as possible. In other words, $\mathfrak{r}'(f)$ is the unique (up to multiplication by a constant $\lambda\in\mathbb{F}_{q^2}^*$) reduced polynomial in the coset of $f$ modulo the ideal $( x^{q+1}-y^q-y, x^{q^2} - x )\subset  \mathcal{L}(\infty)$, with respect to $\nu$. The next lemma gives some properties of $\mathfrak{r}'$.

\begin{lemma} \label{propreduc}
Let $g_1,g_2\in \mathbb{F}_{q^2}[x,y]$. The following properties hold. \newline
(a) $\mathfrak{r}'(g_1+g_2)=\mathfrak{r}'(g_1)+\mathfrak{r}'(g_2)$; \newline
(b) $\mathfrak{r}'(g_1g_2)=\mathfrak{r}'(\mathfrak{r}'(g_1) \mathfrak{r}'(g_2))$.
\end{lemma}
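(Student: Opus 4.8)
The plan is to establish both identities by reducing them to the elementary observation that the ideal
$I=(x^{q+1}-y^q-y,\ x^{q^2}-x)$ is defined over $\mathbb{F}_{q^2}$ and that $\mathfrak{r}'$ selects a canonical representative (up to a scalar) in each coset of $\mathcal{L}(\infty)/I$. The key is that $\mathfrak{r}'$ is, up to normalization, the $\mathbb{F}_{q^2}$-linear projection onto the subspace $\mathcal{R}$ of reduced polynomials along $I$: indeed, by Lemma~\ref{Lemma_1}(c) the evaluation map $ev:\mathcal{R}\to\mathbb{F}_{q^2}^n$ is an isomorphism, and since $ev$ kills exactly $I$ (the two generators of $I$ vanish on all the $P_i$, and a dimension count forces $\ker ev=I$), the composite $ev|_{\mathcal{R}}^{-1}\circ ev:\mathcal{L}(\infty)\to\mathcal{R}$ agrees with $\mathfrak{r}'$ up to the normalization of leading coefficients. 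It is cleanest to first remark that, modulo the scalar ambiguity, $\mathfrak{r}'(g)$ is characterized by: $\mathfrak{r}'(g)\in\mathcal{R}$ and $ev(\mathfrak{r}'(g))=ev(g)$ (this is essentially Lemma~\ref{Lemma_1}(b),(c)). I would phrase the whole proof around this characterization so that both (a) and (b) become one-line verifications.

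For part (a): set $h=\mathfrak{r}'(g_1)+\mathfrak{r}'(g_2)$. This is a sum of reduced polynomials, hence reduced, so $h\in\mathcal{R}$. Moreover $ev(h)=ev(\mathfrak{r}'(g_1))+ev(\mathfrak{r}'(g_2))=ev(g_1)+ev(g_2)=ev(g_1+g_2)$ by Lemma~\ref{Lemma_1}(b) and the $\mathbb{F}_{q^2}$-linearity of $ev$. By the characterization above, $h=\mathfrak{r}'(g_1+g_2)$, which is (a). For part (b): let $h=\mathfrak{r}'(\mathfrak{r}'(g_1)\mathfrak{r}'(g_2))$. By construction $h\in\mathcal{R}$, and using Lemma~\ref{Lemma_1}(b) repeatedly,
\[
ev(h)=ev(\mathfrak{r}'(g_1)\mathfrak{r}'(g_2))=ev(\mathfrak{r}'(g_1))\cdot ev(\mathfrak{r}'(g_2))=ev(g_1)\cdot ev(g_2)=ev(g_1 g_2),
\]
where the products of vectors are componentwise and we use that $ev$ is a ring homomorphism from $\mathbb{F}_{q^2}[x,y]$ to $\mathbb{F}_{q^2}^n$ (pointwise evaluation). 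Again the characterization gives $h=\mathfrak{r}'(g_1 g_2)$, which is (b).

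The one point that needs a little care — and which I expect to be the main (minor) obstacle — is the scalar ambiguity built into the phrase ``unique up to multiplication by a constant'': one must fix a consistent convention (e.g. always normalize the leading coefficient to $1$, i.e. work with $\mathfrak{r}=\mathfrak{n}\circ\mathfrak{r}'$, or else interpret the equalities in (a),(b) as equalities of cosets modulo $\mathbb{F}_{q^2}^*$) so that the additive statement in (a) is literally true and not merely true up to rescaling each summand. The honest reading here is that $\mathfrak{r}'$ denotes the genuine $\mathbb{F}_{q^2}$-linear projection $\mathcal{L}(\infty)\to\mathcal{R}$ (no rescaling), and the earlier ``up to a constant'' remark only reflects that the reduction rules {\bf (R1)},{\bf (R2)} determine the reduced form without caring about an overall scalar; with that reading both (a) and (b) hold on the nose, and the proof is exactly the two computations above. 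I would include a sentence making this convention explicit before stating the computations, and otherwise the verification is immediate from Lemma~\ref{Lemma_1}.
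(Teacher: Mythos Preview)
Your proof is correct and follows essentially the same idea as the paper's: both rely on the fact that $\mathfrak{r}'$ picks out the unique reduced representative in each coset modulo the ideal $I=(x^{q+1}-y^q-y,\,x^{q^2}-x)$, so that (a) and (b) reduce to the quotient map being a ring homomorphism. The paper states this in a single sentence by invoking the ring homomorphism $\mathbb{F}_{q^2}[x,y]\to\mathbb{F}_{q^2}[x,y]/I$ directly, whereas you route the same argument through the evaluation map $ev$ (whose kernel is $I$) and Lemma~\ref{Lemma_1}; this is a difference in presentation rather than substance, and your explicit remark about the scalar-ambiguity convention is a worthwhile clarification.
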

\begin{proof}
Both properties follow from the fact that the natural map $\mathbb{F}_{q^2}[x,y] \rightarrow \mathbb{F}_{q^2}[x,y]/( x^{q+1}-y^q-y, x^{q^2} - x )$ is a ring homomorphism. 
\end{proof}

As a notation, given  a non negative integer $\delta$, we write
\[
  \mathfrak{m}(\delta)= \left\{
    \begin{array}{ll}
      0                     & \mbox{if $\delta=0$}                         \\
      \delta \bmod{(q^2-1)} & \mbox{if $q^2-1\nmid \delta$}                \\ 
      q^2-1                 & \mbox{if $q^2-1|\delta$ and $\delta\neq 0$.} \\
    \end{array}
  \right.
\]
That is, for $\delta\neq 0$, $\mathfrak{m}(\delta)$ is the remainder of $\delta$ modulo $q^2-1$ in the interval $[1, q^2-1]$ rather than $[0,q^2-2]$ as usual. In particular   $\mathfrak{m}(\delta)\le q^2-1$. 

\begin{lemma}\label{sumarq^2+1}
Let $\delta_1,\delta_2$ be nonnegative integers.  The following properties hold. \newline
(a)  $\mathfrak{m}(\delta_1+\delta_2)=\mathfrak{m}(\mathfrak{m}(\delta_1)+\delta_2)=\mathfrak{m}(\mathfrak{m}(\delta_1)+\mathfrak{m}(\delta_2))$.\newline
(b) If $\delta_2>0$ then $\mathfrak{m}(q^2-1+\delta_2)=\mathfrak{m}(\delta_2)$.
\end{lemma}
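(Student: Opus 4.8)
The statement to prove is Lemma~\ref{sumarq^2+1}, concerning elementary properties of the function $\mathfrak{m}(\delta)$, which returns the residue of $\delta$ modulo $q^2-1$ in the range $[1,q^2-1]$ (with the convention $\mathfrak{m}(0)=0$).

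\textbf{Plan of proof.} The whole lemma is a routine exercise in modular arithmetic, so the proof should be short; the only mild subtlety is keeping track of the nonstandard convention at the boundary (residue $q^2-1$ instead of $0$, and the separate treatment of $\delta=0$). First I would record the key characterization: for $\delta \ge 1$ one has $\mathfrak{m}(\delta)\equiv \delta \pmod{q^2-1}$ with $1\le \mathfrak{m}(\delta)\le q^2-1$, and moreover this pins down $\mathfrak{m}(\delta)$ uniquely among positive integers in that range with that residue. For $\delta = 0$ we simply have $\mathfrak{m}(0)=0$. So $\mathfrak{m}$ is the unique function to $\{0,1,\dots,q^2-1\}$ that agrees with the identity modulo $q^2-1$ on positive inputs and sends $0$ to $0$. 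Equivalently, $\mathfrak{m}(\delta)$ is determined by the pair (residue class of $\delta$ mod $q^2-1$, whether $\delta=0$).

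For part (a): if $\delta_1=\delta_2=0$ all three expressions are $0$. Otherwise $\delta_1+\delta_2>0$, and since $\mathfrak{m}(\delta_1)\equiv\delta_1$ and $\mathfrak{m}(\delta_2)\equiv\delta_2$ modulo $q^2-1$, all of $\delta_1+\delta_2$, $\mathfrak{m}(\delta_1)+\delta_2$, and $\mathfrak{m}(\delta_1)+\mathfrak{m}(\delta_2)$ lie in the same residue class modulo $q^2-1$. One must check each of these arguments is positive so that the characterization applies: $\delta_1+\delta_2>0$ by assumption; $\mathfrak{m}(\delta_1)+\delta_2>0$ because if $\delta_1>0$ then $\mathfrak{m}(\delta_1)\ge 1$, and if $\delta_1=0$ then $\delta_2>0$; similarly $\mathfrak{m}(\delta_1)+\mathfrak{m}(\delta_2)>0$. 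Applying $\mathfrak{m}$ to three positive integers in the same residue class yields the same value, which gives the two claimed equalities.

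For part (b): here $\delta_2>0$, so $q^2-1+\delta_2>0$ and $\delta_2>0$, and both are positive integers congruent modulo $q^2-1$; hence $\mathfrak{m}(q^2-1+\delta_2)=\mathfrak{m}(\delta_2)$ by the uniqueness characterization. (Alternatively, (b) is the special case $\delta_1=q^2-1$ of the first equality in (a), since $\mathfrak{m}(q^2-1)=q^2-1$ gives $\mathfrak{m}(q^2-1+\delta_2)=\mathfrak{m}(\mathfrak{m}(q^2-1)+\delta_2)$ — wait, that runs in the wrong direction, so it is cleanest to argue (b) directly.) The main (and essentially only) obstacle is being careful about the degenerate cases where one argument vanishes, so that one never invokes the "positive input" characterization on a zero argument; once those cases are dispatched by hand, everything reduces to the observation that $\mathfrak{m}$ is constant on positive integers within a fixed residue class modulo $q^2-1$.
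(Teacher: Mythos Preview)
Your argument is correct: the key observation that $\mathfrak{m}(\delta)\equiv\delta\pmod{q^2-1}$ for all $\delta\ge 0$, together with the fact that $\mathfrak{m}$ takes the same value on any two \emph{positive} integers in the same residue class, is exactly what is needed, and your case analysis of the possible vanishing of $\delta_1,\delta_2$ is handled properly. The paper itself states Lemma~\ref{sumarq^2+1} without proof, treating it as an immediate consequence of the definition of $\mathfrak{m}$; your write-up simply makes explicit what the authors regarded as routine.
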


Given nonnegative integers $\delta,\mu$, we denote the binomial coefficient modulo $p$ as
\[
  \binom{\mu}{\delta}_{\!\!  p} = \binom{\mu}{\delta} \;  \bmod{p}
\]
where $p$ is the characteristic of $\mathbb{F}_q$.

\begin{proposition}\label{mathfrakr'}
Let $f=x^ay^b$ with $0\le a$ and $0\le b<q$.  Then we have
\[
  \mathfrak{r}'(f^q)=\sum_{j=0}^b (-1)^j\binom{b}{j}_{\!\! p} x^{\mathfrak{m}(\nu(f)-j(q+1))} y^j.
\]
\end{proposition}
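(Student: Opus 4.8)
The plan is to compute $\mathfrak{r}'(f^q) = \mathfrak{r}'(x^{aq}y^{bq})$ by treating the two factors separately and then combining via Lemma~\ref{propreduc}(b). First I would handle $y^{bq}$. Since $b<q$, Lucas' theorem gives $(1+t)^{bq} \equiv (1+t^q)^b \pmod p$ as polynomials, so expanding shows that in characteristic $p$ one has $y^{bq} = (y^q)^b$ and, substituting the relation \textbf{(R1)} $y^q = x^{q+1}-y$ (valid in $\mathcal{L}(\infty)$, not merely modulo the ideal), we get
\[
  y^{bq} = (x^{q+1}-y)^b = \sum_{j=0}^b (-1)^j \binom{b}{j}_{\!\!p} x^{(q+1)(b-j)} y^j .
\]
Here the binomial coefficients are automatically reduced mod $p$ because we are working in $\mathbb{F}_{q^2}$; and since $0\le j\le b<q$, each resulting monomial already has $y$-degree $<q$, so no further \textbf{(R1)} reduction is needed on the $y$-part.

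Next I would multiply by $x^{aq}$ and apply \textbf{(R2)}, i.e. reduce the $x$-exponent modulo $q^2-1$ via $x^{q^2}=x$ (equivalently $x^{q^2-1}=1$ on the supports of reduced polynomials — more precisely, $\mathfrak{r}'(x^{q^2}) = x$ iterated). The exponent of $x$ in the $j$-th term is $aq + (q+1)(b-j) = \nu(f) - j(q+1)$, recalling $\nu(f) = aq + b(q+1)$. Reducing this exponent by repeatedly subtracting $q^2-1$ until it lands in the reduced range replaces it by $\mathfrak{m}(\nu(f)-j(q+1))$: the point of using $\mathfrak{m}$ rather than the ordinary remainder in $[0,q^2-2]$ is precisely that the substitution \textbf{(R2)} lowers the exponent by $q^2-1$ each time and only stops when the exponent is in $[1,q^2-1]$ (or is $0$), matching the definition of $\mathfrak{m}$; this is also what guarantees the output is genuinely the reduced representative with respect to $\nu$. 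Assembling,
\[
  \mathfrak{r}'(f^q) = \sum_{j=0}^b (-1)^j \binom{b}{j}_{\!\!p} x^{\mathfrak{m}(\nu(f)-j(q+1))} y^j,
\]
as claimed. I would note that distinct values of $j$ give monomials with distinct $y$-exponents, so no collapsing or cancellation occurs among the terms and the right-hand side is already a reduced polynomial in the required form.

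The main technical point — and the place where a little care is needed rather than a genuine obstacle — is the interchange between "reduce mod the ideal" and the concrete substitution rules. One must check that after expanding $(x^{q+1}-y)^b$ and multiplying by $x^{aq}$, applying \textbf{(R2)} termwise does not create new $y^q$'s (it does not, since \textbf{(R2)} only touches $x$), and that the $x$-reductions in different terms are independent, so that $\mathfrak{r}'$ distributes over the sum by Lemma~\ref{propreduc}(a). A secondary point is the edge case $\nu(f)-j(q+1)=0$, which can only happen when $a=0$ and $j=b$, where $\mathfrak{m}(0)=0$ gives the monomial $y^b$ with no $x$-factor, consistent with the formula. The use of $\binom{b}{j}_{\!\!p}$ and the identity $y^{bq}=(y^q)^b$ rest on Lucas' theorem together with $b<q$, which ensures the base-$p$ digits of $bq$ are just those of $b$ shifted, so no carries interfere; I would state this explicitly since it is the one "external" ingredient.
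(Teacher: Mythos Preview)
Your proof is correct and follows essentially the same route as the paper's: write $(y^b)^q=(y^q)^b=(x^{q+1}-y)^b$, expand via the binomial theorem, multiply by $x^{aq}$, and reduce each $x$-exponent with \textbf{(R2)} to obtain $\mathfrak{m}(\nu(f)-j(q+1))$. One small remark: the identity $y^{bq}=(y^q)^b$ is pure exponent arithmetic and requires neither Lucas' theorem nor the hypothesis $b<q$; that hypothesis is used only to guarantee that the resulting $y$-degrees $j\le b$ are already in reduced range, so no further \textbf{(R1)} is needed.
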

\begin{proof}
If $b=0$ then only the reduction (R2) is involved in $\mathfrak{r}'$.  If $a=0$,  according to the Newton's binomial formula and Lemma \ref{propreduc}  we have
\begin{align*}
\mathfrak{r}'((y^b)^q)= \mathfrak{r}'((x^{q+1}-y)^b) &=\mathfrak{r}'\left(\sum_{j=0}^b (-1)^j\binom{b}{j}_{\!\! p} x^{(b-j)(q+1)} y^j\right)\\
                                                     &=\sum_{j=0}^b (-1)^j\binom{b}{j}_{\!\! p} x^{(b-j)(q+1)} y^j.
\end{align*}
If $ab>0$, according to Lemma \ref{propreduc} and the previous computations  we have
\begin{align*}
\mathfrak{r}'((x^ay^b)^q) &=\sum_{i=0}^b (-1)^j\binom{b}{j}_{\!\! p} \mathfrak{r}'(x^{\mathfrak{m}(aq)+(b-j)(q+1)}) y^j\\
&=\sum_{j=0}^b (-1)^j\binom{b}{i}_{\!\! p} x^{\mathfrak{m}(aq+(b-j)(q+1))} y^j
\end{align*}
since all summands in the last expression are reduced monomials.
\end{proof}

\subsection{Computing  $\nu(\mathfrak{r}(f_i^q))$. Case $q$ prime}
\label{Sect5.2}

From Proposition \ref{mathfrakr'} we can deduce the values $\nu(\mathfrak{r}(f_i^q))$, for $f_i\in \mathcal{M}_*$ we need to run our algorithm. 
When $q=p$ is a prime number, then all binomial coefficients in the formula of Proposition \ref{mathfrakr'} are  non-zero. When $q$ is not a prime then some of these binomial coefficients may vanish, and the description of $\nu(\mathfrak{r}(f_i^q))$ becomes more involved.  We first study the case in which $q$ is a prime number. The case $q$  not a prime will be treated in the next subsection.

Let $f\in \mathcal{M}_*$. We will write the order $\nu(f)$ as $uq^2+sq+t$ with $0\leq s,t<q$ and  $0\le u\leq (q+1)/2$.  Note that this representation is unique.

\begin{proposition} \label{prop:qthOrders}
Let $q=p$ be a prime, and let $f\in\mathcal{M}_*$. Write $\nu(f)=uq^2+sq+t$ with $0\leq u\leq (q+1)/2$ and $0\leq s,t<q$. Then
  \[
    \nu(\mathfrak{r}(f^q))=
    \begin{cases}
      sq^2+(u+t)q & \mbox{\rm if $s\ge t$ and $u+t<q^2-sq$;} \\
      q^3-2q^2+(u+t)q+1 & \mbox{\rm if $s\ge t$ and $u+t\ge q^2-sq$;} \\
      q^3-q^2+(u+t-1)q+s+1 & \mbox{\rm if $s< t$ and $u+t\le q+s+1$;} \\
      q^3-2q^2+(u+t-1)q+s+2 & \mbox{\rm if $s< t$ and $u+t> q+s+1$.}          
    \end{cases}
  \]
\end{proposition}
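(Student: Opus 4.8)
The plan is to turn the statement into a maximization over one integer variable and then settle it by a finite case analysis.

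\emph{Step 1: reduction to a maximum.} Write $f=x^ay^b$ with $0\le b<q$. By Proposition~\ref{mathfrakr'},
\[
  \mathfrak r'(f^q)=\sum_{j=0}^{b}(-1)^j\binom{b}{j}_{\!\! p}\, x^{\mathfrak m(\nu(f)-j(q+1))}y^j .
\]
Because $q=p$ and $0\le b<q=p$, every $\binom{b}{j}_{\!\! p}$ is nonzero (Lucas' theorem), so $\mathfrak r'(f^q)$ is a sum of $b+1$ nonzero reduced monomials, and the $\nu$-values $q\,\mathfrak m(\nu(f)-j(q+1))+j(q+1)$ of its summands are pairwise distinct, since modulo $q$ they equal $j$ with $0\le j\le b<q$. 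Hence, by Proposition~\ref{Lemma_0} and the fact that normalization does not affect $\nu$,
\[
  \nu(\mathfrak r(f^q))=\max_{0\le j\le b}\ g(j),\qquad g(j):=q\,\mathfrak m(\delta_j)+j(q+1),\quad \delta_j:=\nu(f)-j(q+1).
\]
Moreover $\nu(f)=aq+b(q+1)=(a+b)q+b$ forces $b=t$ and $a+b=uq+s$; in particular $a\ge 0$ gives $u\ge 1$ whenever $s<t$.

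\emph{Step 2: the staircase.} As $j$ increases by one, the argument of $\mathfrak m$ drops by $q+1$; thus $\mathfrak m(\delta_j)=\mathfrak m(\delta_{j-1})-(q+1)$ and $g$ decreases by $(q-1)(q+1)$, \emph{except} when $\mathfrak m(\delta_{j-1})\le q+1$, in which case $\mathfrak m(\delta_j)=\mathfrak m(\delta_{j-1})+q^2-q-2$ and $g$ jumps up by $(q-1)^2(q+1)$ (the degenerate case $\delta_j=0$, occurring only for $f=y^t$ at $j=t$, gives $\mathfrak m(\delta_t)=0$ and is treated by hand). Since $\delta_0-\delta_t=t(q+1)\le(q-1)(q+1)=q^2-1$, at most one such jump occurs for $j\in\{1,\dots,t\}$, so $\nu(\mathfrak r(f^q))=\max\{g(0),g(j^\ast)\}$, where $j^\ast$ is the index right after the jump whenever a jump occurs in range.

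\emph{Step 3: evaluating the candidates.} From $\nu(f)=u(q^2-1)+(u+sq+t)$ one gets $\mathfrak m(\nu(f))=u+sq+t$ when $u+t<q^2-sq$ and $\mathfrak m(\nu(f))=u+sq+t-(q^2-1)$ otherwise; correspondingly $g(0)=sq^2+(u+t)q$ in the first case, and a strictly smaller quantity in the second, which (using $u\le(q+1)/2$) forces $s=q-1$. To locate $j^\ast$ one tracks $\mathfrak m(\delta_j)$: for $0\le j\le\min(s,t)$ one has $\mathfrak m(\delta_j)=(s-j)q+(t-j)+u$, and — using $u\ge1$ when $s<t$ to compute $\mathfrak m(\delta_j)$ also for $j>s$ — a short check shows that the first index at which $\mathfrak m(\delta_{\cdot})\le q+1$ is $j^\ast-1$, with $j^\ast=1$ when $s\ge t$ and $u+t\ge q^2-sq$; $j^\ast=s+1$ when $s<t$ and $u+t\le q+s+1$; $j^\ast=s+2$ when $s<t$ and $u+t>q+s+1$; and with no jump at all (so the maximum is $g(0)=sq^2+(u+t)q$) when $s\ge t$ and $u+t<q^2-sq$. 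Plugging the relevant value of $\mathfrak m(\delta_{j^\ast-1})$ into $g(j^\ast)=q\big(\mathfrak m(\delta_{j^\ast-1})+q^2-q-2\big)+j^\ast(q+1)$ yields
\[
  g(1)=q^3-2q^2+(u+t)q+1,\qquad g(s+1)=q^3-q^2+(u+t-1)q+s+1,
\]
\[
  g(s+2)=q^3-2q^2+(u+t-1)q+s+2,
\]
which are exactly the last three expressions in the statement; in each case a one-line estimate (from $s\le q-1$ in case $2$, and from the sharper bound on $s$ that the case hypothesis forces in cases $3$--$4$) shows $g(j^\ast)>g(0)$, so the maximum equals $g(j^\ast)$.

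\emph{Main obstacle.} The delicate part is the bookkeeping of Step~3: one must verify, within the prescribed range $q^2-1\le\nu(f)\le m^\ast$ (i.e. $0\le u\le(q+1)/2$, $0\le s,t<q$), that the jump sits precisely at the stated index and not earlier, and correctly absorb the boundary configurations — the degenerate monomial $f=y^t$ (where $\delta_t=0$) and the case $t=0$ (where $j$ runs over a single value), both of which fall under the first formula. What makes this feasible is the combination of the ``at most one jump'' bound with the implication $s<t\Rightarrow u\ge1$, which together confine $j^\ast$ to $\{1,\,s+1,\,s+2\}$.
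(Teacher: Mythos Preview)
Your proof is correct and follows essentially the same line as the paper's: both reduce the problem to maximizing $g(j)=q\,\mathfrak m(\nu(f)-j(q+1))+j(q+1)$ over $0\le j\le t$ using Proposition~\ref{mathfrakr'}, identify the same candidate indices $j^\ast\in\{0,1,s+1,s+2\}$ under the same four-way case split, and verify the four formulas by direct computation. Your ``staircase with at most one jump'' framing is a tidy reorganization of the paper's more ad hoc case-by-case analysis (which in the $s<t$ case parametrizes backward from $j=t$ via $h=t-j$ rather than forward from $j=s$), but the underlying argument is the same.
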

\begin{proof}
The assumption $\nu(f)=uq^2+sq+t$ implies $f=x^{uq+s-t}y^t$ and therefore 
\begin{equation} \label{prooforderfq}
\mathfrak{r}'(f^q)=\sum_{j=0}^t (-1)^j\binom{t}{j}_{\!\! p} x^{\mathfrak{m}(uq^2+sq+t-j(q+1))} y^j
\end{equation}
from Proposition \ref{mathfrakr'}. Since $t<q=p$, all binomial coefficients are nonzero modulo $p$, so it is enough to find the highest order among all the summands in the previous expression.
To do this, we use the following observation. According to Lemma \ref{sumarq^2+1} (b),  when the condition $(S): sq+u+t>j(q+1)$ holds, then  $\mathfrak{m}(uq^2+sq+t-j(q+1))=\mathfrak{m}(sq+t+u-j(q+1))$. In particular $(S)$ is always satisfied for $j=0$, and for all values of $j$ when $s\ge t$.
Let us distinguish three separate cases.\newline
{\em Case 1:} If $s\ge t$ and $sq+u+t<q^2$, as above we have $\mathfrak{m}(sq+u+t-j(q+1))=sq+u+t-j(q+1)$. The maximum order among the summands of $\mathfrak{r}'(f^q)$  is $sq^2+(u+t)q$, obtained for $j=0$. \newline
{\em Case 2:} If $s\ge t$ and $sq+u+t\ge q^2$, then $s=q-1, t\ge 1$ and $sq+u+t\le q^2+q$, hence  $\mathfrak{m}(sq+u+t-j(q+1))=sq+u+t-(q^2-1)$ if $j=0$ and $\mathfrak{m}(sq+u+t-j(q+1))=sq+u+t-j(q+1)$ if $j>0$. Thus the maximum order in $\mathfrak{r}'(f^q)$ is  $q^3-2q^2+(u+t)q+1$, which is obtained for $j=1$.
\newline
{\em Case 3:} If  $s< t$, then $u> 0$. Note that the condition $(S)$ is satisfied for all $j\le s$. Since $sq+u+t< q^2$,  in this range we have $\mathfrak{m}(sq+u+t-j(q+1))=sq+u+t-j(q+1)$. So  the maximum order among the summands of $\mathfrak{r}'(f^q)$ corresponding to $j\le s$ is $sq^2+(u+t)q$, which is obtained for $j=0$. Consider now the summand corresponding to $j=t$. Since $uq^2+sq+t-t(q+1)= (u-1)(q^2-1)+(q+s-t)q +u-1$, we have $\mathfrak{m}(uq^2+sq+t-t(q+1))=(q+s-t)q +u-1$ and the corresponding monomial has order $(q+s-t)q^2 +(u-1)q+t(q+1)$, which is bigger than the orders we have obtained for $j\le s$. 
Finally let $j=t-h>s$. According to Lemma \ref{sumarq^2+1}(a) and the computation made for $j=t$, we have
$\mathfrak{m}(uq^2+sq+t-j(q+1))=\mathfrak{m}((q+s-t)q+(u-1)+h(q+1))=\mathfrak{m}((q+s-t+h)q+(u+h-1))$,
hence the maximum order is obtained when $h$ is as large as possible satisfying the condition $(q+s-t+h)q+(u+h-1)\le q^2-1$, or equivalently $(t-s-h)q \ge u+h$. It is easy to check that such largest value of $h$ is 
$h=t-s-1$ (that is $j=s+1$) when $q \ge u+t-s-1$, and $h=t-s-2$ (that is $j=s+2$) when $q < u+t-s-1$. A straightforward computation gives the maximum order among the summands of  \eqref{prooforderfq} in both cases, which is $q^3-q^2+(u+t-1)q+s+1$ if  $q \ge u+t-s-1$ and $q^3-2q^2+(u+t-1)q+s+2$ if  $q< u+t-s-1$.
\end{proof}
Proposition~\ref{prop:qthOrders} gives us the $\nu$-value of the leading monomial in $f^q$, it can be used to determine when $\nu(\mathfrak{r}(f^q))= \nu(\mathfrak{r}(f'^q))$ holds for two functions $f, f'\in\mathcal{M}_*$. However, we will postpone this until after Proposition~\ref{prop:monomialSupport}, which concerns the $\nu$-values of \emph{all} monomials in $f^q$.
To state this proposition, we use $\Supp f$ to denote the monomial support of $f$. That is, if $f=\sum c_{a,b}x^ay^b\in\mathbb{F}_{q^2}[x,y]$, we define $\Supp f=\{x^ay^b\mid c_{a,b}\neq 0\}$.

\begin{proposition}\label{prop:monomialSupport}
Let $f\in\mathcal{M}$. Then
\begin{equation*}
\Supp \mathfrak{r}(f^q) \subseteq \left\{f_j\in \mathcal{M} : \nu(f_j)\equiv q\nu(f)\pmod{q^2-1}\right\}.
\end{equation*}
Furthermore, if $q+1$ divides $\nu(f)$, then $\nu(f)\equiv q\nu(f)\pmod{q^2-1}$.
\end{proposition}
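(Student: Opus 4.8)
The plan is to read the monomial support directly off the closed form for $\mathfrak{r}'(f^q)$ provided by Proposition~\ref{mathfrakr'}, and then verify the congruence one monomial at a time. Since $f\in\mathcal{M}$ is a reduced monomial $x^ay^b$ with $0\le b<q$, that proposition gives
\[
  \mathfrak{r}'(f^q)=\sum_{j=0}^b (-1)^j\binom{b}{j}_{\!\! p} x^{\mathfrak{m}(\nu(f)-j(q+1))} y^j ,
\]
and $\mathfrak{r}(f^q)=\mathfrak{n}(\mathfrak{r}'(f^q))$ is just a nonzero scalar multiple of this, hence has the same monomial support. Each summand $x^{\mathfrak{m}(\nu(f)-j(q+1))}y^j$ is a reduced monomial, because $\mathfrak{m}(\cdot)\le q^2-1<q^2$ and $j\le b<q$; so regardless of which binomial coefficients vanish mod $p$ (and regardless of possible cancellations, which in fact cannot occur since distinct $j$ give distinct $y$-degrees), one has $\Supp\mathfrak{r}(f^q)\subseteq\{x^{\mathfrak{m}(\nu(f)-j(q+1))}y^j : 0\le j\le b\}\subseteq\mathcal{M}$.

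Next I would compute the $\nu$-value of a generic summand. Using $\nu(x^\alpha y^\beta)=\alpha q+\beta(q+1)$, the term indexed by $j$ has $\nu$-value $q\,\mathfrak{m}(\nu(f)-j(q+1))+j(q+1)$. From the three-case definition of $\mathfrak{m}$ one checks that $\mathfrak{m}(\delta)\equiv\delta\pmod{q^2-1}$ for every $\delta\ge0$, and since $q(q^2-1)\equiv0\pmod{q^2-1}$ this yields $q\,\mathfrak{m}(\nu(f)-j(q+1))\equiv q\bigl(\nu(f)-j(q+1)\bigr)\pmod{q^2-1}$. Adding $j(q+1)$ and simplifying,
\[
  q\nu(f)-jq(q+1)+j(q+1)=q\nu(f)-j(q-1)(q+1)=q\nu(f)-j(q^2-1)\equiv q\nu(f)\pmod{q^2-1},
\]
which is exactly the asserted inclusion. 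For the final sentence, I would assume $q+1\mid\nu(f)$, write $\nu(f)=(q+1)m$, and note that $q\nu(f)-\nu(f)=(q-1)\nu(f)=(q-1)(q+1)m=(q^2-1)m$, so $\nu(f)\equiv q\nu(f)\pmod{q^2-1}$.

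There is no real obstacle here: the argument is essentially the displayed computation. The only points requiring a little care are the bookkeeping for the three-case definition of $\mathfrak{m}$, so that $\mathfrak{m}(\delta)\equiv\delta$ is valid in every case (including $\delta=0$ and $q^2-1\mid\delta$), and the observation that the reduction step \textbf{(R2)} acts only on the $x$-variable, so that passing from $\mathfrak{r}'$ to $\mathfrak{r}$ and discarding monomials with vanishing binomial coefficient can only shrink the support — harmless, since only an inclusion is claimed.
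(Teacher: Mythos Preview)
Your proof is correct, but it takes a different route from the paper's. The paper argues directly at the level of the reduction rules: starting from $\nu(f^q)=q\nu(f)$, it checks that each application of \textbf{(R1)} produces monomials whose orders differ from the input by $0$ or $q^2-1$, and each application of \textbf{(R2)} shifts the order by $q(q^2-1)$, so every step preserves the residue class modulo $q^2-1$. You instead invoke the closed form of Proposition~\ref{mathfrakr'} and read the congruence off the explicit exponents $\mathfrak{m}(\nu(f)-j(q+1))$, using that $\mathfrak{m}(\delta)\equiv\delta\pmod{q^2-1}$ in all three cases of its definition. Your approach is tidier and avoids re-tracking the reduction process, at the cost of depending on the somewhat heavier Proposition~\ref{mathfrakr'}; the paper's argument is more self-contained and would work even without that closed form. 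The treatment of the second claim is essentially identical in both.
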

\begin{proof}
  If $f=1$ then both results are clear. Let us assume $f\neq 1$.
  We have $\nu(f^q)=q\nu(f)$, so in particular, $\nu(f^q)\equiv q\nu(f)\pmod{q^2-1}$. Thus, it suffices to show that when applying each of the reductions \textbf{(R1)} and \textbf{(R2)} to $f^q$, the orders of all resulting monomials remain in the original equivalence class.

  Hence, let $f^q=x^ay^b$. If $a<q^2$ and $b<q$, then the result is immediate since $\mathfrak{r}(x^ay^b)=x^ay^b$. If $a\geq q^2$, we can apply \textbf{(R2)} once to obtain the monomial $x^{a-(q^2-1)}y^b$. This has order $\nu(x^ay^b)-q(q^2-1)$, meaning that the equivalence class modulo $q^2-1$ is preserved. If $b\geq q$, applying \textbf{(R1)} gives two monomials $x^{a+(q+1)}y^{b-q}$ and $x^ay^{b-(q-1)}$. These have orders $\nu(x^ay^b)$ and $\nu(x^ay^b)-(q^2-1)$, respectively. Again, the resulting orders are in the same equivalence class as $\nu(x^ay^b)$. This proves the first claim of the proposition.

  For the second claim, one has that $q+1\mid\nu(f)$ implies $(q-1)\nu(f)\equiv 0\pmod{q^2-1}$, which can be rearranged to obtain the result.
\end{proof}
\begin{corollary}\label{psi}
Let $f,f'\in\mathcal{M}_*$. If $\nu(\mathfrak{r}(f^q))= \nu(\mathfrak{r}(f'^q))$ then  $\nu(f)\equiv\nu(f') \pmod{q^2-1}$.
\end{corollary}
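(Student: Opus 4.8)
The plan is to derive the statement directly from Proposition~\ref{prop:monomialSupport}, which controls the entire monomial support of $\mathfrak{r}(f^q)$. Since $\mathfrak{r}(f^q)$ is a nonzero reduced polynomial (it is the normalization of $\mathfrak{r}'(f^q)$, which is nonzero), its leading monomial with respect to $\nu$ lies in $\Supp\mathfrak{r}(f^q)$. First I would invoke Proposition~\ref{prop:monomialSupport} to conclude that this leading monomial, and hence $\mathfrak{r}(f^q)$ itself, has $\nu$-value congruent to $q\nu(f)\pmod{q^2-1}$; that is, $\nu(\mathfrak{r}(f^q))\equiv q\nu(f)\pmod{q^2-1}$. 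The same reasoning applied to $f'$ gives $\nu(\mathfrak{r}(f'^q))\equiv q\nu(f')\pmod{q^2-1}$.

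Now suppose $\nu(\mathfrak{r}(f^q))=\nu(\mathfrak{r}(f'^q))$. Combining the two congruences yields $q\nu(f)\equiv q\nu(f')\pmod{q^2-1}$, i.e. $q\bigl(\nu(f)-\nu(f')\bigr)\equiv 0\pmod{q^2-1}$. Since $\gcd(q,q^2-1)=1$, the factor $q$ can be cancelled, giving $\nu(f)\equiv\nu(f')\pmod{q^2-1}$, which is the claim. The only subtlety I expect is making sure the leading monomial of $\mathfrak{r}(f^q)$ is genuinely in the support set described by Proposition~\ref{prop:monomialSupport} — but this is immediate, since $\mathfrak{r}$ only normalizes (multiplies by a nonzero scalar) and does not alter the monomial support of $\mathfrak{r}'(f^q)$, and Proposition~\ref{prop:monomialSupport} is stated precisely for $\mathfrak{r}(f^q)$. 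So there is no real obstacle here; the corollary is a short congruence manipulation on top of the structural result already established.
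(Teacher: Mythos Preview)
Your proposal is correct and follows exactly the paper's approach: invoke Proposition~\ref{prop:monomialSupport} to obtain $\nu(\mathfrak{r}(f^q))\equiv q\nu(f)\pmod{q^2-1}$ (and likewise for $f'$), then combine the two congruences. The paper's proof is terser and leaves the cancellation of $q$ (via $\gcd(q,q^2-1)=1$) implicit, but the argument is the same.
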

\begin{proof}
  Proposition~\ref{prop:monomialSupport} tells us that $\nu(\mathfrak{r}(f^q))\equiv q\nu(f)\pmod{q^2-1}$, and similarly for $\mathfrak{r}(f'^q)$. Combining these equivalences gives the result.
\end{proof}
We now have a necessary condition for two functions $f, f'\in\mathcal{M}_*$ to satisfy
$\nu(\mathfrak{r}(f^q))= \nu(\mathfrak{r}(f'^q))$. This can be used to bound the number of reductions in our algorithm in Section~\ref{Sect4}, since the reduction $\mathfrak{s}$ is only applied to the monomials $f_i$ for which there exists $f_j\in \mathcal{M}_*$ with $j<i$ and $\nu(\mathfrak{r}(f_j^q))= \nu(\mathfrak{r}(f_i^q))$. The number of such $f_j$'s should be moderate, as the following proposition shows that the map $f\mapsto\nu(f)\bmod{(q^2-1)}$ is quite uniformly distributed.

\begin{proposition} \label{papendice}
(a) The map $\mathcal{M}_*\rightarrow \mathbb{Z}/(q^2-1)\mathbb{Z}$ given by $f\mapsto \nu(f)\bmod{(q^2-1)}$ is surjective. \newline
(b) For $0\leq k<q^2-1$, the number of monomials $f\in\mathcal{M}$ with $\nu(f)\equiv k\pmod{q^2-1}$ is
\begin{itemize}
  \item $q+2$ when $k=0$
  \item $q+1$ when $k\neq 0$ and $q+1\mid k$
  \item $q$ when $k\neq 0$ and $q+1\nmid k$
\end{itemize}
\end{proposition}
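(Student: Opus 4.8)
The plan is to work directly with the monomials in $\mathcal{M}$. Recall that $\mathcal{M}$ consists of the reduced monomials $x^ay^b$ with $0\le a<q^2$ and $0\le b<q$, and that $\nu(x^ay^b)=aq+b(q+1)$. Since $\mathcal{M}$ has $q^3$ elements, an immediate sanity check is that the three cardinalities in part (b) should sum to $q^3$ when summed over all residues: indeed $(q+2)+\big(\text{number of nonzero multiples of }q+1\text{ in }[0,q^2-2]\big)\cdot 1 + \big(\text{number of nonmultiples}\big)$ added to the base count of $q$ per residue class gives the right total, which I would record as a consistency remark. The heart of the argument is to count, for a fixed target residue $k$, how many pairs $(a,b)$ with $0\le a<q^2$, $0\le b<q$ satisfy $aq+b(q+1)\equiv k\pmod{q^2-1}$.

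First I would reduce the congruence. Write $aq+b(q+1)=aq+bq+b$. Modulo $q^2-1$ we have $q^2\equiv 1$, so $q$ is invertible with $q^{-1}\equiv q$. Hence the condition $aq+b(q+1)\equiv k$ is equivalent, after multiplying by $q$, to $a+b+bq\equiv kq\pmod{q^2-1}$, i.e. $a+b(q+1)\equiv kq\pmod{q^2-1}$; a slicker route is to observe that $aq+b(q+1)\equiv k$ means $a\equiv q(k-b(q+1))\equiv qk - b(q^2+q)\equiv qk - b - bq\pmod{q^2-1}$. For each of the $q$ choices of $b\in\{0,\dots,q-1\}$, this determines $a$ modulo $q^2-1$, hence uniquely determines $a\in\{0,1,\dots,q^2-2\}$ \emph{unless} the forced residue is $0$, in which case $a\in\{0,q^2-1\}$ both lie in the allowed range $[0,q^2-1]$ and give two monomials. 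So the count is $q$ plus (the number of $b\in\{0,\dots,q-1\}$ for which the forced value of $a$ is $\equiv 0\pmod{q^2-1}$). The forced residue of $a$ is $0$ exactly when $b(q+1)\equiv k\pmod{q^2-1}$.

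The remaining task is therefore: for how many $b\in\{0,\dots,q-1\}$ does $b(q+1)\equiv k\pmod{q^2-1}$ hold? Since $q^2-1=(q-1)(q+1)$ and $\gcd(q+1,q-1)$ is $1$ or $2$, the map $b\mapsto b(q+1)\bmod{(q^2-1)}$ is injective on $\{0,\dots,q-1\}$ (if $b(q+1)\equiv b'(q+1)$ then $(q-1)\mid (b-b')$, forcing $b=b'$ in that range), and its image is exactly $\{0,\,q+1,\,2(q+1),\dots,(q-1)(q+1)\}\bmod{(q^2-1)}=\{j(q+1):0\le j\le q-2\}\cup\{(q-1)(q+1)\bmod(q^2-1)\}=\{j(q+1):0\le j\le q-1\}$, but $(q-1)(q+1)=q^2-1\equiv 0$, so the value $0$ is attained twice (at $b=0$ and $b=q-1$) and each nonzero multiple $j(q+1)$, $1\le j\le q-2$, once. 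Thus: if $k=0$ there are $2$ such $b$, giving count $q+2$; if $k\ne 0$ and $(q+1)\mid k$ with $k$ a nonzero multiple of $q+1$ that is $<q^2-1$, there is exactly $1$ such $b$, giving count $q+1$; and if $(q+1)\nmid k$ there are $0$ such $b$, giving count $q$. This establishes (b). For (a), surjectivity of $\mathcal{M}_*\to\mathbb{Z}/(q^2-1)$: part (b) already shows every residue class is hit inside all of $\mathcal{M}$; I would then check that a representative of small enough order can be chosen to lie in $\mathcal{M}_*=\mathcal{M}_{m^*}$, e.g. by noting that for each residue one of the monomials counted above has order below $m^*=\lfloor q^3/2+(q^2-q)/2-1\rfloor$ — concretely, taking $b$ minimal and then $a$ the unique small representative keeps $\nu(f)=aq+b(q+1)<q\cdot q^2 = q^3$, and a short estimate shows one can always land at or below $m^*$ (or invoke the duality $\Delta(m)=\Delta(m^\perp)$ to cover the rest).

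The main obstacle is purely bookkeeping: handling the boundary value $a\equiv 0$ carefully (the ``$q^2-1$ vs.\ $0$'' ambiguity encoded in the paper's function $\mathfrak{m}$), and making sure the over-counting at $k=0$ and at multiples of $q+1$ is attributed to exactly the right $b$'s without double-or-under-counting. Once the reformulation ``$\#\{b: b(q+1)\equiv k\}$'' is in place, everything is elementary modular arithmetic with $q^2-1=(q-1)(q+1)$; I would present it compactly and, if desired, cross-check the three cases against the total $q^3=\#\mathcal{M}$ as a final consistency test.
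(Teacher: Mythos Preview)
Your argument for (b) is correct and takes a genuinely different, more elementary route than the paper. The paper proceeds constructively: it first isolates, via two preparatory lemmas, the monomial $f^\ast=x^{a^\ast}y^{b^\ast}$ of maximal order in each residue class, then writes down two explicit sequences of monomials obtained from $f^\ast$ by trading powers of $x^{q+1}$ for powers of $y$, checks which of these lie in $\mathcal{M}$, and finally appeals to the total count $q^3=\#\mathcal{M}$ to rule out further monomials. Your approach instead fixes $b$ and solves the congruence $aq+b(q+1)\equiv k\pmod{q^2-1}$ for $a$: since $q$ is its own inverse mod $q^2-1$, each $b$ forces a unique residue for $a$, and the range $0\le a\le q^2-1$ contains exactly one representative unless that residue is $0$, in which case it contains two. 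This reduces the whole count to determining how many $b\in\{0,\dots,q-1\}$ satisfy $b(q+1)\equiv k$, which is immediate. Your method dispenses with the auxiliary lemmas and the explicit sequences; the paper's method, in exchange, exhibits the monomials concretely.

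One small slip to clean up: you assert that $b\mapsto b(q+1)\bmod(q^2-1)$ is injective on $\{0,\dots,q-1\}$, arguing that $(q-1)\mid(b-b')$ forces $b=b'$. This is false precisely at the pair $\{0,q-1\}$, as you yourself note two lines later when you observe that $0$ is attained twice. Drop the injectivity claim and state directly that the map hits $0$ twice (at $b=0$ and $b=q-1$) and each nonzero multiple $j(q+1)$, $1\le j\le q-2$, exactly once. For part (a), your sketch is adequate and matches the paper's ``(a) is clear''; a clean one-line justification is that every integer $\ge q^2-q$ lies in the semigroup $\langle q,q+1\rangle$, so the interval $[q^2-q,\,2q^2-q-2]$ already contains a monomial order in every residue class, and these orders are all well below $m^\ast$.
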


The proof of this result can be found in the Appendix.
To illustrate the preceding results, we give the following example.
\begin{example}
  Let $q=5$, and consider $f=f_{24}=x^3y^3$ with $\nu(f_{24})=33=q^2+q+3$. Thus, $\nu(f_{24})\bmod{(q^2-1)}=9$.
  Turning to the $q$'th power, we see that $\mathfrak{r}(f_{24}^q)=3x^{21}y^2 + 4x^{15}y^3 + x^9 + 2x^3y$. Table~\ref{tab:exampleSupport} lists the orders of the monomials in $\Supp\mathfrak{r}(f_{24}^q)$ along with their orders modulo $q^2-1$. In each case, this remainder is $21$, which is exactly $q\nu(f_{24})\bmod{(q^2-1)}$.

  The remaining monomials $f_j$ with $\nu(f_j)\equiv 9\pmod{q^2-1}$ are $f_{48}=x^9y^2$, $f_{72}=x^{15}y$, $f_{96}=x^{21}$, and $f_{119}=x^{21}y^4$. Thus, there are a total of $5$ such monomials as predicted by Proposition~\ref{papendice}. If we consider the support of their $q$'th powers, we obtain the four monomials in Table~\ref{tab:exampleSupport} and the monomial $f_{60}=x^9y^4$.

  One may also note that there is a duality between these two sets of monomials. Namely, the monomial support of $\{f_{12}^q,f_{36}^q,f_{60}^q,f_{84}^q,f_{108}^q\}$ is exactly $\{f_{24},f_{48},\allowbreak f_{72},f_{96},f_{119}\}$.
\end{example}
\begin{table}[ht]
    \centering
    \begin{tabular}{cccc}
      $j$ & $f_j$ & $\nu(f_j)$ & $\nu(f_j)\bmod{(q^2-1)}$\\
      \hline
      $108$ & $x^{21}y^2$ & $117$ & $21$ \\
      $84$ & $x^{15}y^3$ & $93$ & $21$ \\
      $36$ & $x^9$ & $45$ & $21$ \\
      $12$ & $x^3y$ & $21$ & $21$ \\
      \hline
    \end{tabular}
    \caption{Monomials in the support of $\mathfrak{r}(f_{24}^q)$ for $q=5$}
	\label{tab:exampleSupport}
\end{table}

\subsection{Computing  $\nu(\mathfrak{r}(f_i^q))$. Case $q$ non-prime}
\label{sec:non-prime-q}

In this section we describe $\nu(\mathfrak{r}(f^q))$ when $q=p^r$ is not a prime, analogous to what was done in Proposition \ref{prop:qthOrders} when $q$ is a prime number. Our study will rely on Lucas' theorem, which relates the binomial coefficient $\binom{t}{j}\bmod{p}$ to the $p$-ary representations of $t$ and $j$. In \cite{GKS13}, the concept of a \emph{$p$-shadow} is defined as follows. Let $t=\sum_{i=0}^{r-1} t_i p^i$ and $j=\sum_{i=0}^{r-1} j_i p^i$ be the $p$-ary representations of $t$ and $j$, where $0\le j\le t<q$. If $j_i\leq t_i$ for all $i$, then $j$ is said to be in the $p$-shadow of $t$, and we write $j\leq_p t$. As a corollary to Lucas' theorem \cite{Fine47,lucas1891}, we  have
\begin{equation}\label{eq:lucasTheorem}
  \binom{t}{j}_{\!\!  p}\neq  0  \quad\mbox{ if and only if }\quad j\leq_p t.
\end{equation}
Inspired by the $p$-shadow, we introduce the \emph{$p$-illumination}.

\begin{definition}\label{defi:illumination}
Let $q=p^r$ be a prime power and let $0\leq j\leq t<q$ be integers. If $t=\sum_{i=0}^{r-1} t_i p^i$ and $j=\sum_{i=0}^{r-1} j_i p^i$ are the $p$-ary representations of $t$ and $j$, we let
\[
    \illum{t}{j}=\sum_{i=0}^{i^\ast} (j_i-t_i)p^i,
\]
where $i^\ast$ is the largest index such that $j_{i^\ast}>t_{i^\ast}$. If no such $i^\ast$ exists, we let $\illum{t}{j}=0$. We call $\illum{t}{j}$ the {\em $p$-illumination} of $j$ with respect to $t$. 
\end{definition}

It is easy to verify that $\illum{t}{j}$ is a non-negative integer. Additionally, $j-\illum{t}{j}$ is in the $p$-shadow of $t$, and it is the largest integer less than or equal to $j$ that satisfies this property. This follows from the observation that $j-\illum{t}{j}$ has the same $p$-ary digits as $t$ for indices $1,2,\ldots,i^\ast$. Each increment up to $j$ will have at least one of these digits greater than the corresponding $p$-digit of $t$. These considerations yield the following lemma.

\begin{lemma}\label{lem:binomials}
  Let $q=p^r$ be a prime power, $t<q$ and $0\leq j\leq t$. If $\illum{t}{j}=0$, then $\binom{t}{j}_{\!\!  p}\neq 0$. Otherwise, if $\illum{t}{j}>0$, then
  \[
    \binom{t}{j}_{\!\!  p}= \binom{t}{j-1}_{\!\!  p}= \cdots =\binom{t}{j-\illum{t}{j}+1}_{\!\!  p}= 0 \; 
    \mbox { and }
    \quad \binom{t}{j-\illum{t}{j}}_{\!\!  p}\neq 0.
  \]
\end{lemma}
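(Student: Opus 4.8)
The plan is to read off everything directly from the definition of $\illum{t}{j}$ and from Lucas' theorem in the form stated as \eqref{eq:lucasTheorem}, namely $\binom{t}{j}_{\!\! p}\neq 0$ if and only if $j\leq_p t$. Fix the $p$-ary representations $t=\sum_{i=0}^{r-1}t_i p^i$ and $j=\sum_{i=0}^{r-1}j_i p^i$ with $0\le j\le t<q$.

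\textbf{The case $\illum{t}{j}=0$.} By definition this means there is no index $i$ with $j_i>t_i$, i.e. $j_i\le t_i$ for all $i$, which is precisely $j\leq_p t$. Hence $\binom{t}{j}_{\!\! p}\neq 0$ by \eqref{eq:lucasTheorem}.

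\textbf{The case $\illum{t}{j}>0$.} Let $i^\ast$ be the largest index with $j_{i^\ast}>t_{i^\ast}$, so $\illum{t}{j}=\sum_{i=0}^{i^\ast}(j_i-t_i)p^i$. The first thing I would verify is that $j-\illum{t}{j}$ is exactly the integer whose $p$-ary digits agree with those of $j$ in positions $>i^\ast$ and with those of $t$ in positions $\le i^\ast$; this is immediate from the formula, since subtracting $\sum_{i=0}^{i^\ast}(j_i-t_i)p^i$ from $j$ replaces each low digit $j_i$ by $t_i$ and leaves the high digits untouched — and crucially there is \emph{no borrowing}, because each digit being changed is being decreased (or left fixed) by $j_i-t_i\ge 0$ when $j_i\ge t_i$, while for the indices $i\le i^\ast$ with $j_i<t_i$ the term $(j_i-t_i)<0$ increases that digit, but it cannot overflow since the result $t_i<p$. (The one point needing a sentence of care: the sum $\sum_{i=0}^{i^\ast}(j_i-t_i)p^i$ may mix positive and negative terms, so I should argue that digit-by-digit the subtraction $j_i-(j_i-t_i)=t_i$ lands in $[0,p-1]$ for every $i\le i^\ast$ and leaves $j_i$ for $i>i^\ast$, hence there is genuinely no carry and the displayed integer has exactly those digits.) Once this is established, $j-\illum{t}{j}\le_p t$ by construction, so the last binomial coefficient $\binom{t}{j-\illum{t}{j}}_{\!\! p}\neq 0$ by \eqref{eq:lucasTheorem}. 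For the vanishing of the intermediate coefficients, I claim that for every integer $\ell$ with $j-\illum{t}{j}<\ell\le j$ we have $\ell\not\leq_p t$: indeed, among the digit positions $0,1,\dots,i^\ast$, the value $j-\illum{t}{j}$ has digits $t_0,\dots,t_{i^\ast}$, and incrementing past it (while keeping the high digits $\le$ those of $j$, hence $=$ those of $j$ once we are in the range below $j$) must at some low position produce a digit strictly exceeding $t_i$ before we can reach $j$, because $j$ itself has $j_{i^\ast}>t_{i^\ast}$ in position $i^\ast$ and any carry that would clear position $i^\ast$ would push $\ell$ beyond $j$. Making this monotonicity-in-$p$-ary-digits argument precise — showing every $\ell$ strictly between $j-\illum{t}{j}$ and $j$ has \emph{some} digit position $\le i^\ast$ in which it exceeds $t$ — is the one genuinely combinatorial step; I would phrase it by noting that the digits of such $\ell$ in positions $>i^\ast$ coincide with those of $j$ (they cannot differ without $\ell$ leaving the interval), so $\ell$ and $j$ differ only in positions $\le i^\ast$, and since $j$ in position $i^\ast$ already has $j_{i^\ast}>t_{i^\ast}$, as long as $\ell>j-\illum{t}{j}$ either position $i^\ast$ of $\ell$ still exceeds $t_{i^\ast}$ or some lower position does. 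Then \eqref{eq:lucasTheorem} gives $\binom{t}{\ell}_{\!\! p}=0$ for all these $\ell$, which is exactly the chain of equalities
\[
\binom{t}{j}_{\!\! p}=\binom{t}{j-1}_{\!\! p}=\cdots=\binom{t}{j-\illum{t}{j}+1}_{\!\! p}=0.
\]

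\textbf{Main obstacle.} Nothing here is deep; the only place where I would be careful is the ``no borrowing / digit bookkeeping'' claim for $j-\illum{t}{j}$ and the parallel claim that every $\ell$ in the open interval $(j-\illum{t}{j},\,j]$ fails $\ell\leq_p t$. Both are really statements about $p$-ary arithmetic without carries, and I would isolate them as a short sublemma (or just a paragraph) so that the application of Lucas' theorem at the end is a one-line invocation of \eqref{eq:lucasTheorem}.
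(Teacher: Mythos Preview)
Your proposal is correct and follows the same approach as the paper. The paper's proof is only two lines because the key combinatorial facts you work out here --- that $j-\illum{t}{j}$ has $p$-ary digits $(t_0,\dots,t_{i^\ast},j_{i^\ast+1},\dots,j_{r-1})$, lies in the $p$-shadow of $t$, and is the \emph{largest} integer $\le j$ with that property --- are stated (without full proof) in the paragraph immediately following Definition~\ref{defi:illumination}, so the lemma becomes a one-line invocation of \eqref{eq:lucasTheorem}. Your write-up simply unpacks those observations in place. One small suggestion: your argument that every $\ell\in(j-\illum{t}{j},\,j]$ fails $\ell\leq_p t$ can be tightened to a single sentence --- once you know the high digits of $\ell$ agree with those of $j$, the low part of $\ell$ strictly exceeds $\sum_{i\le i^\ast}t_ip^i$, and if every low digit of $\ell$ were $\le t_i$ that inequality would be impossible.
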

\begin{proof}
  By \eqref{eq:lucasTheorem}, $\binom{t}{j'}_{\!\!  p}$ is non-zero if and only if $j'\leq_p t$. The observations immediately below Definition~\ref{defi:illumination} imply that  the first such $j'$ below $j$ is $j-\illum{t}{j}$, proving the lemma.
\end{proof}

In other  words, the largest integer $j^*\le j$ satisfying $\binom{t}{j^*}_{\!\!  p}\neq 0$ is $j^*=j-\rho_t(j)$. In addition, we can apply Lemma~\ref{lem:binomials} to $t-j$ and use the symmetry of the binomial coefficient, $\binom{t}{i}=\binom{t}{t-i}$, to infer that $\binom{t}{j}_{\!\! p}=\cdots=\binom{t}{j+\rho_t(t-j)-1}_{\!\! p}=0$ and $\binom{t}{j+\rho_t(t-j)}\neq 0$. That is, the smallest integer $j^*\ge j$ satisfying $\binom{t}{j^*}_{\!\!  p}\neq 0$ is $j^*=j+\rho_t(t-j)$.

\begin{proposition}\label{qthordersnoprime}
  Let $q=p^r$ be a prime power, and let $f\in\mathcal{M}_*$. Write $\nu(f)=uq^2+sq+t$ with $0\leq u\leq (q+1)/2$ and $0\leq s,t<q$. Then $ \nu(\mathfrak{r}(f^q))$ equals
\[
    \begin{cases}
      sq^2+(u+t)q & \mbox{\rm if $s\ge t$, $u+t<q^2-sq$;} \\
      q^3-2q^2+(u+t)q+1-(q^2-1)\rho_t(t-1) & \mbox{\rm if $s\ge t$, $u+t\ge q^2-sq$;} \\
      q^3-q^2+(u+t-1)q+s+1-(q^2-1)\rho_t(t-s-1) & \mbox{\rm if $s< t$, $u+t\le q+s+1$;} \\
      q^3-2q^2+(u+t-1)q+s+2-(q^2-1)\rho_t(t-s-2) & \mbox{\rm if $s< t$, $u+t> q+s+1$.}          
    \end{cases}
  \]  
\end{proposition}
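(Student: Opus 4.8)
The plan is to mirror the structure of the proof of Proposition~\ref{prop:qthOrders} for the prime case, inserting Lucas' theorem at the precise point where that proof used the fact that every binomial coefficient $\binom{t}{j}_p$ is nonzero. As before, I would start from the expansion
\[
\mathfrak{r}'(f^q)=\sum_{j=0}^t (-1)^j\binom{t}{j}_{\!\! p}\, x^{\mathfrak{m}(uq^2+sq+t-j(q+1))}\, y^j
\]
provided by Proposition~\ref{mathfrakr'}, using $f=x^{uq+s-t}y^t$. In the prime case, the leading monomial of $\mathfrak{r}(f^q)$ came from some distinguished index $j_0\in\{0,1,s+1,s+2\}$ (depending on which of the four regimes we are in), and the corresponding $\nu$-value was exactly the quantity appearing inside the first term of each branch. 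The key observation is that in the non-prime case the summand at that same index $j_0$ may \emph{vanish} because $\binom{t}{j_0}_p=0$; when this happens, we must jump to the nearest index $j^\ast$ below (or, in the first regime, it does not matter) with $\binom{t}{j^\ast}_p\neq 0$, and by Lemma~\ref{lem:binomials} that index is $j^\ast=j_0-\rho_t(j_0)$.

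So the main steps are: (i) for each of the four regimes, identify, just as in Proposition~\ref{prop:qthOrders}, the index $j_0$ that maximizes $\nu$ among \emph{all} the monomials $x^{\mathfrak{m}(\cdots)}y^j$, regardless of whether the coefficient vanishes; (ii) verify that this maximizing index is, respectively, $j_0=0$ in the first regime (so $\rho_t(0)=0$ and no correction appears, consistent with the stated formula having no $\rho_t$ term there), $j_0=1$ in the second, $j_0=s+1$ in the third, and $j_0=s+2$ in the fourth; (iii) observe that among the indices between $j_0-\rho_t(j_0)$ and $j_0$ the monomials $x^{\mathfrak{m}(uq^2+sq+t-j(q+1))}y^j$ are strictly $\nu$-decreasing as $j$ increases in that range (this is the content of the computations in Cases~2 and~3 of Proposition~\ref{prop:qthOrders}, where $\mathfrak{m}(\cdot)$ behaves linearly, so decreasing $j$ by one increases the exponent of $x$ by $q+1$ and drops $y$ by one, a net increase of $q^2-1>0$ in $\nu$); hence the true leading monomial is the one at index $j^\ast=j_0-\rho_t(j_0)$; (iv) compute $\nu$ of that monomial, which differs from the prime-case value by exactly $\rho_t(j_0)$ steps of size $q^2-1$, i.e. by $-(q^2-1)\rho_t(j_0)$, and check that $\rho_t(j_0)=\rho_t(t-1),\rho_t(t-s-1),\rho_t(t-s-2)$ in the three relevant regimes. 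The last identification is immediate once one notes $j_0=t-(t-1)$ hmm --- more carefully, in regime two $j_0=1$ and we need the nearest good index below $1$, which by Lemma~\ref{lem:binomials} applied with the symmetry $\binom{t}{j}=\binom{t}{t-j}$ is controlled by $\rho_t$ evaluated appropriately; writing it in the form stated requires matching $\rho_t(t-1)$ with the jump from $j_0=1$, and similarly $\rho_t(t-s-1)$ with the jump from $j_0=s+1$ and $\rho_t(t-s-2)$ from $j_0=s+2$, using that $t-j$ is the ``complementary'' index.

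A subtlety worth isolating: one must make sure that after jumping to $j^\ast=j_0-\rho_t(j_0)$, \emph{no other} summand (with $j$ outside the window $[j^\ast,j_0]$, in particular with $j>j_0$ or, in regime three, with $j\le s$) has larger $\nu$. For $j$ outside that window the prime-case analysis already shows those $\nu$-values are strictly smaller than the value at $j_0$; and since the value at $j^\ast$ may be smaller than that at $j_0$, one should double-check that the gap $(q^2-1)\rho_t(j_0)$ does not overtake the next-largest block. Here the boundary hypotheses of each regime (e.g. $u+t\ge q^2-sq$, or $u+t> q+s+1$) together with $\rho_t(j_0)\le t<q$ and the crude bound $\rho_t(j_0)(q^2-1)<q^3$ should suffice, but this comparison is the part I expect to be the main obstacle --- it requires carefully re-examining the $\mathfrak{m}$-values of the competing summands in each of the four branches, exactly as in Cases~1--3 of the prime proof but now tracking a possibly-displaced leading index. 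Once that monotonicity-across-the-jump claim is nailed down, assembling the four formulas is a routine substitution.
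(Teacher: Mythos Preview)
Your overall strategy---follow the prime-case proof and locate the nearest index with nonzero binomial coefficient---is the paper's strategy, but you have the direction of the jump backwards, and this is a genuine error rather than a cosmetic one.

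In each of regimes~2--4 the prime-case maximizing index $j_0$ is the \emph{smallest} index of a block of consecutive $j$'s on which $\mathfrak{m}(uq^2+sq+t-j(q+1))$ behaves linearly; inside that block the order decreases by $q^2-1$ each time $j$ increases by one (exactly your computation in step~(iii)). Stepping \emph{below} $j_0$ crosses a wraparound of $\mathfrak{m}$ and lands in a strictly lower-$\nu$ block: in regime~2, going from $j=1$ down to $j=0$ drops the $x$-exponent by $q^2-1$ rather than raising it by $q+1$; in regimes~3--4, going from $j=s+1$ (resp.\ $s+2$) down to $j\le s$ returns to the block whose orders were already shown in the prime proof to be dominated. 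So the first available nonvanishing summand is the one \emph{above} $j_0$, namely the smallest $j^\ast\ge j_0$ with $\binom{t}{j^\ast}_{\! p}\neq 0$; by the remark following Lemma~\ref{lem:binomials} this is $j^\ast=j_0+\rho_t(t-j_0)$. The resulting order is then the prime-case value minus $(q^2-1)(j^\ast-j_0)=(q^2-1)\rho_t(t-j_0)$, which matches the stated formula immediately since $t-j_0$ equals $t-1$, $t-s-1$, $t-s-2$ in the three respective regimes. No symmetry trick is needed; the appearance of $\rho_t(t-j_0)$ rather than $\rho_t(j_0)$ is precisely the signature of an upward jump.

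Two further symptoms of the direction error: your step~(iii) is internally inconsistent (if $\nu$ strictly increases as $j$ decreases near $j_0$, then $j_0$ could not have been the maximum), and your step~(iv) would produce a correction of $+(q^2-1)\rho_t(j_0)$ rather than the $-(q^2-1)\rho_t(\cdot)$ appearing in the statement. Once you reverse the jump direction, the ``subtlety'' you flag about competing summands outside the window also becomes straightforward: every summand with $j<j_0$ already lies in a lower block by the prime-case analysis, and every summand with $j\ge j_0$ has order at most the value at $j^\ast$ by monotonicity.
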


\begin{proof}
The proof of this result is similar to that of Proposition \ref{prop:qthOrders}, in which we sought the summand $j$ providing the highest order in  the  writing of $\mathfrak{r}'(f^q)$ given in the Proposition \ref{mathfrakr'}.
Let $j^*$ be the index of the summand providing the maximum order in our case $q$ not a prime.
Following the proof of Proposition  \ref{prop:qthOrders}, in case~1 such maximum order is obtained for $j=0$. Since $\binom{t}{0}_{\!\!  p}\neq 0$, we have $j^*=0$.  In case~2, the maximum order comes from the summand corresponding to the smallest index $j^*\ge 1$ with $\binom{t}{j^*}_{\!\!  p}\neq 0$, that is for $j^*=1+\rho_t(t-1)$ according to Lemma \ref{lem:binomials}.
In case~3, the maximum order comes as well from the summand corresponding to the smallest index $j^*\ge j$ with $\binom{t}{j^*}_{\!\!  p}\neq 0$,  where $j=s+1$ when  $u+t\le q+s+1$ and $j=s+2$ when  $u+t> q+s+1$. Such indices are $j^*=s+1+\rho_t(t-s-1)$ and $j^*=s+2+\rho_t(t-s-2)$ respectively.
In all cases, it is enough to compute the orders of the summands corresponding to these $j^*$'s to obtain the stated formula.
\end{proof}

Note that when $\nu(f)\ge q^2$, the values  $ \nu(\mathfrak{r}(f^q))$ are, in most cases, bigger when $q$ is a prime number than when it is not, $q=p^r$ with $r>0$. Therefore, also $\Delta$ increases (and so the entanglement $c$ decreases) when $r$ increases.

\subsection{Bounding the complexity of Algorithm~\ref{alg:dec1}}
Having described the $q$'th powers given in the previous sections, we now use those results to bound the computational complexity of Algorithm~\ref{alg:dec1}. Since the most computationally costly part are the reductions $\mathfrak{s}$ -- that is, lines~\ref{alg:reduction1} and \ref{alg:reduction2} --  we focus on bounding the number of such reductions as well as bounding the cost of each reduction.

\begin{proposition}\label{prop:algComplexity}
  The total number of reductions $\mathfrak{s}$ in Algorithm~\ref{alg:dec1} is $\mathcal{O}(q^4)$, and each reduction requires $\mathcal{O}(q)$ field operations.
\end{proposition}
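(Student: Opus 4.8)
The plan is to bound the two quantities separately: first the total number of $\mathfrak{s}$-reductions performed over the whole run of the algorithm, then the cost of one such reduction.

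For the count of reductions, the key observation is that a reduction step (lines~\ref{alg:reduction1}--\ref{alg:reduction2}) is triggered for the pair $(\phi_j,\phi_i)$ only when $\nu(\phi_j)=\nu(\phi_i)$, and each such step strictly decreases $\nu(\phi_j)$ while keeping $\phi_j$ reduced. So the total number of reductions is at most $\sum_{j=1}^{\ell}(\text{number of decreases of }\nu(\phi_j))$. I would bound this by a global counting argument rather than a per-$j$ argument. By Corollary~\ref{psi}, a collision $\nu(\mathfrak{r}(f_j^q))=\nu(\mathfrak{r}(f_i^q))$ — and, more carefully, a collision between any two of the intermediate $\phi$'s produced during the process — forces $\nu(f_j)\equiv\nu(f_i)\pmod{q^2-1}$; in fact all monomials ever appearing in the $\phi$'s built from $f_i$ lie in a single residue class mod $q^2-1$ by Proposition~\ref{prop:monomialSupport}. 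Hence the indices $\{1,\dots,\ell\}$ split into residue classes, and reductions only ever happen within a class. By Proposition~\ref{papendice}(b), each residue class contains at most $q+2$ monomials of $\mathcal{M}$, and therefore at most $q+2$ of the $f_i\in\mathcal{M}_*$. Within one class the functions $\phi$ all have $\nu$-values in that class, i.e.\ among at most $O(q)$ possible values in the relevant range $[0,m^*]$ — here I would use that a reduced polynomial has order in $[0,n+2g-1]$ and that within a fixed residue class mod $q^2-1$ there are $O(q)$ admissible $\nu$-values (again Proposition~\ref{papendice}(b), or a direct count since $(n+2g-1)/(q^2-1)=O(q)$). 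Since each reduction within a class moves some $\phi_j$ strictly down through these $O(q)$ slots, and there are at most $q+2=O(q)$ functions in the class, the number of reductions per class is $O(q)\cdot O(q)=O(q^2)$; summing over the $q^2-1$ classes gives $O(q^4)$ total.

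For the cost of a single reduction: each reduction is a subtraction $\phi_j\leftarrow\phi_j-\phi_i$ followed by a normalization $\phi_j\leftarrow\phi_j/\mathrm{LC}(\phi_j)$. By Proposition~\ref{prop:monomialSupport} (applied to each $f_i$ that spawned the current $\phi$'s) all monomials in $\phi_i$ and $\phi_j$ lie in one residue class mod $q^2-1$; combined with the reducedness constraints $0\le a<q^2$, $0\le b<q$, a fixed residue class contains only $O(q)$ reduced monomials (Proposition~\ref{papendice}(b) again). Hence each $\phi$ has at most $O(q)$ terms, so the subtraction and the scaling each cost $O(q)$ field operations, and one reduction is $O(q)$ field operations as claimed. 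Multiplying, the total reduction cost is $O(q^4)\cdot O(q)=O(q^5)$, matching the complexity quoted in the introduction (the $\mathfrak{r}$-preprocessing in the first loop is dominated by this via Proposition~\ref{mathfrakr'}).

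The main obstacle I anticipate is the bookkeeping in the reduction-count step: one must argue that the intermediate $\phi$'s (not just the final $\mathfrak{r}(f_i^q)$) stay inside one residue class — this is exactly what Proposition~\ref{prop:monomialSupport} is for, since the $\phi$'s are $\mathbb{F}_p$-combinations of the $\mathfrak{r}(f_i^q)$ in a fixed class — and that the "$O(q)$ available $\nu$-slots per class" bound is correct in the range $[0,m^*]$, so that a strict decrease argument really caps the work at $O(q^2)$ per class rather than something larger. Once the residue-class decomposition is set up, the rest is a direct product of the two $O(q)$ bounds.
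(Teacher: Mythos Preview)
Your proposal is correct and takes essentially the same approach as the paper: partition the $f_i$'s into residue classes modulo $q^2-1$ via Proposition~\ref{prop:monomialSupport} and Corollary~\ref{psi}, bound each class size and each $\phi$'s monomial support by $O(q)$ via Proposition~\ref{papendice}, and multiply out. The only cosmetic difference is in the per-class reduction count: the paper observes that the $t$-th polynomial encountered in a class can be reduced against at most its $t-1$ predecessors, yielding the explicit triangular sum $\sum_{t=1}^{q+2}(t-1)=q(q+1)/2$, whereas you argue via strict $\nu$-decrease through the $O(q)$ available slots in the class---both give $O(q^2)$ per class and hence $O(q^4)$ overall.
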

\begin{proof}
  First recall that $\phi_i=\mathfrak{r}(f_i^q)$. Thus, Proposition~\ref{prop:monomialSupport} ensures that all monomials $f_k$ in $\phi_i$ satisfy $\nu(f_k)\equiv q\nu(f_i)\pmod{q^2-1}$.
  If we consider some $\phi_j$ during the algorithm, similar arguments as in Corollary~\ref{psi} show that the reduction $\phi_j=\phi_j-\phi_i$ can happen only if $\nu(f_j)\equiv\nu(f_i)\pmod{q^2-1}$. Thus, we bound the total number of reductions by an amortized analysis, grouping polynomials $\phi_i$ based on the equivalence class of $\nu(f_i)$.

  Fix a $k\in\{0,1,\ldots,q^2-2\}$, and consider all polynomials $\phi_i$ such that $\nu(f_i)\equiv k\pmod{q^2-1}$. By Proposition~\ref{papendice}, there can be at most $q+2$ such monomials. The first time the algorithm encounters such a $\phi_i$ no reduction will happen. The second time such a $\phi_i$ is found, there will be at most one reduction. The third time at most two and so forth.
  Thus, the total number of reductions required to process the $\phi_i$ with $\nu(f_i)\equiv k\pmod{q^2-1}$ cannot exceed
  \[
    \sum_{j=1}^{q+2}(j-1)=\frac{q(q+1)}{2}.
  \]
  Because there are $q^2-1$ possible values of $k$, the overall number of reductions during the algorithm is at most $\frac{1}{2}q(q+1)(q^{2}-1)$, which is $\mathcal{O}(q^4)$.

  To prove the claim on the number of field operations per reduction, note that each $\phi_i$ can contain at most $q+2$ monomials by Propositions~\ref{prop:monomialSupport} and \ref{papendice}.
\end{proof}
Since the reductions $\mathfrak{s}$ are what dominates the complexity of Algorithm~\ref{alg:dec1}, Proposition~\ref{prop:algComplexity} implies that the number of field operations used during the algorithm is $\mathcal{O}(q^5)$, which is significantly better than computing $c$ as in Proposition \ref{Prep:WildeHerm} using linear algebra methods. Namely, it is significantly better than computing the rank of $\mathbf{G}\mathbf{G}^*$, where $\mathbf{G}$ is a generator matrix of the linear code and $\mathbf{G}^*$ is the $q$-th power of the transpose matrix of $\mathbf{G}$ (see \cite[Proposition 3]{Galindo:2019}).

We can optimize the algorithm even further by using the values of $\nu(\mathfrak{r}(f^q))$ found in Sections~\ref{Sect5.2} and \ref{sec:non-prime-q}.
Namely, the algorithm only requires the knowledge of $\mathfrak{r}(f_i^q)$ when $f_i$ is involved in some  reduction $\mathfrak{s}$, which does not happen for all the values of $i$ (see for example the case $ q = 3 $ in Table \ref{tab:tabla1}).

Therefore, we can modify the algorithm as follows: we first compute the orders $\nu(\mathfrak{r}(f_i^q))$ from Propositions  \ref{prop:qthOrders} and \ref{qthordersnoprime}. If $f_i$ is not involved in any  reduction $\mathfrak{s}$, then we will not calculate $\mathfrak{r}(f_i^q)$. Otherwise, when it is necessary to know this data, it is computed from Proposition~\ref{mathfrakr'}, and the corresponding polynomials $\phi$'s from the reduction $\mathfrak{s}$. 
The obtained result is shown in Table \ref{tab:tabla4}  for $q=3$.  The reader can compare this table with Table 
\ref{tab:tabla1}, in which the unmodified process is shown.

\begin{table}[h!]
\centering
\begin{tabular}{cccccc} 
$f_i$    & $\nu(f_i)$ & $\mathfrak{r}(f_i^q)$ & $\nu(\mathfrak{r}(f_i^q))$ & $\phi_i$    & $\nu(\phi_i)$ \\ \hline
$1$      & 0          &                       & 0                          &             & 0   \\
$x$      & 3          &                       & 9                          &             & 9   \\
$y$      & 4          &   $x^4+2y$       & 12                         &  $x^4+2y$    & 12  \\
$x^2$    & 6          &                       & 18                         &             & 18  \\    
$xy$     & 7          & $x^7+2x^3y$           & 21                    & $x^7+2x^3y$ & 21  \\
$y^2$    & 8          &                       & 24                         &             & 24  \\
$x^3$    & 9          &                       & 3                          &             & 3   \\
$x^2y$    & 10         &                       & 22                         &             & 22  \\
$xy^2$   & 11         &                       & 25                         &             & 25  \\
$x^4$    & 12         &  $x^4$                & 12                         &  $y$        & 4   \\
$x^3y$   & 13         &                       & 15                         &             & 15  \\
$x^2y^2$ & 14         &                       & 26                         &             & 26  \\
$x^5$    & 15         & $x^7$                 & 21                         & $x^3y$      & 13  \\
\hline
\end{tabular}
\caption{Results of the modified algorithm for $q=3$.}\label{tab:tabla4}
\end{table}

\subsection{Computational results}
\label{reults}
We have computed the parameters of all Hermitian EAQECCs up to field size $16$ using our algorithm. These lists of parameters reveal that many of the resulting EAQECCs exceed the Gilbert-Varshamov bound~\cite{Galindo:2019}. Once $\mathcal{C}(m)$ is large enough to produce an EAQECC of non-zero dimension, all following values of $m$ seem to produce codes exceeding the bound as well until the resulting entanglement $c$ is `too small' compared to the code length. More specifically, we have verified that the parameters of the EAQECCs exceed the Gilbert-Varshamov bound \cite{Galindo:2019} for the values of $c$ specified in Table~\ref{tab:tabla6}. For lower values of $c$, it is sometimes possible to find codes that exceed the bound, but it will not be true for all codes with this entanglement.
\begin{table}[h!]
  \centering
    \begin{tabular}{ccc}  
      $q$  & $n$ & $c$   \\ \hline
      2 & 8 & 0--3 \\
      3 & 27 & 1--16 \\
      4 & 64 &  3--45 \\
      5 & 125 & 4--96 \\
      7 & 343 & 10--288 \\
      8 & 512 & 9--441\\
      9 & 729 & 14--640 \\
      11 & 1331 & 38--1200 \\
      13 & 2197 & 51--2016\\
      16 & 4096 & 45--3825\\
      \hline
    \end{tabular}   
    \caption{Hermitian EAQECCs exceding the GV bound.}
    \label{tab:tabla6}
\end{table}

\section*{Acknowledgements}
The authors thank the anonymous reviewers, whose helpful comments have led to a better manuscript.

\bibliographystyle{plainurl}
\bibliography{ref.bib}

\appendix
\label{appendix}
\section{Proof of Proposition \ref{papendice}}

In this appendix we give the proof of  Proposition \ref{papendice}, which relies on the following lemmata.

\begin{lemma}\label{lem:largestElement}
  Let $f=x^ay^b\in\mathcal{M}$, and assume that any $f_j\in\mathcal{M}$ with $\nu(f_j)\equiv\nu(f)\pmod{q^2-1}$ satisfies $\nu(f_j)\leq\nu(f)$. Then $(q-2)(q+1)< a\leq (q-1)(q+1)$ and $0<b<q$.
\end{lemma}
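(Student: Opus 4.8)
The statement concerns a reduced monomial $f=x^ay^b$ that is the $\nu$-largest monomial in its residue class modulo $q^2-1$, and we must pin down the ranges of $a$ and $b$. The natural strategy is to argue by contradiction in each of the two directions: if $b$ is not in $(0,q)$, or if $a$ does not lie in the half-open interval $\bigl((q-2)(q+1),\,(q-1)(q+1)\bigr]$, I would exhibit another reduced monomial $x^{a'}y^{b'}$ in the same class with strictly larger $\nu$-value, contradicting maximality. Throughout I would use $\nu(x^ay^b)=aq+b(q+1)$ and the fact that the admissible ranges for a reduced monomial are $0\le a<q^2=(q-1)(q+1)+(q-1)$ and $0\le b<q$; note $(q-1)(q+1)=q^2-1$, so the upper bound $a\le(q-1)(q+1)$ together with $a<q^2$ forces $a\in\{q^2-1\}\cup\ldots$, i.e. the claim is that $a$ is quite close to its maximum.

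First, the easy half: $0<b<q$. The upper bound $b<q$ is immediate from reducedness, so only $b>0$ needs an argument. Suppose $b=0$, so $f=x^a$ with $a<q^2$. Then I would find $x^{a'}y^{b'}$ reduced with $a'q+b'(q+1)\equiv aq\pmod{q^2-1}$ and strictly larger value. Concretely, since $q+1$ is invertible modulo $q^2-1$ divided out appropriately—better, I would use the single reduction move in reverse: passing from $x^ay^b$ to $x^{a-(q+1)}y^{b+q}$ changes the $y$-degree out of range but leaves $\nu$ unchanged, and passing from $x^ay^b$ to $x^{a+(q^2-1)}y^b$ raises $\nu$ by $q(q^2-1)$; combined with a reduction back into range one can typically increase $\nu$ unless $a$ is already near-maximal. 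The cleanest route is: for a class $k$, the reduced monomials in that class are finitely many, and among them the maximal one is characterised by the property that increasing $a$ by $q+1$ and decreasing $b$ by $q$ is impossible (i.e. $b<q$ forbids going the other way to gain) AND $a+(q+1)\ge q^2$, i.e. $a> q^2-1-(q+1)=(q-2)(q+1)$ — because otherwise $x^{a+(q+1)}y^{b-q}$ would be reduced (if $b\ge q$) or, if $b<q$, one uses $x^{a}y^{b}\mapsto$ add $q^2-1$ to the exponent of $x$ then reduce. I would organise this as: show that maximality forces $a+(q+1)>q^2-1$ (else a larger representative exists) and $b>0$ (else, since then $a<q^2$, the monomial $x^{a-(q+1)}y^{q}$ reduces via \textbf{(R1)} to something of the same value but then one more step gains).

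So the core computation is: given $x^ay^b$ reduced, the monomial obtained by the inverse of \textbf{(R1)} applied to a $y^q$ we artificially create, together with \textbf{(R2)} to bring $x$-exponents down, produces a reduced monomial of value $\nu(x^ay^b)+(q^2-1)$ or leaves it unchanged; iterating, the class-maximum is the unique reduced representative whose value cannot be raised by $q^2-1$, which forces $a\in\bigl((q-2)(q+1),(q-1)(q+1)\bigr]$ and then, separately, $b\ne0$ because with $b=0$ and $a=q^2-1$ one has $\nu=q(q^2-1)$, while $x^{q^2-1-(q+1)}y^{q}$ is NOT reduced but reduces to give a monomial of value $q(q^2-1)$ plus something—I'd check the arithmetic to confirm a strict gain is available, contradicting maximality.

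The main obstacle I anticipate is handling the boundary/carry behaviour cleanly: the interplay between the two reduction rules means the "add $q^2-1$ to $\nu$" move does not act by a single clean formula on $(a,b)$ — sometimes it is $a\mapsto a-(q+1),\ b\mapsto b+q$ followed by \textbf{(R1)}, sometimes it involves \textbf{(R2)}, and the case $b=0$ versus $b>0$ behaves differently. I would therefore isolate a small sublemma: "for any reduced $x^ay^b$ with $a\le(q-2)(q+1)$, there is a reduced monomial in the same residue class with strictly larger $\nu$," proved by the explicit move $x^ay^b\mapsto x^{a+q+1}y^{b}\cdot(\text{then reduce }y\text{ if needed})$—wait, that changes the class; the correct move keeping the class is multiplication by $x^{q^2-1}$ followed by reduction, and one checks the reduction does not undo the full gain when $a$ is small. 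Getting that inequality bookkeeping right, and separately ruling out $b=0$ by an analogous explicit move, is where the real work lies; everything else is substitution into $\nu(x^ay^b)=aq+b(q+1)$.
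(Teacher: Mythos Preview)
Your overall plan---argue by contradiction and exhibit a reduced monomial in the same residue class with strictly larger $\nu$---is exactly what the paper does. But you have not found the two explicit monomials that make the argument work, and the route you sketch through the reduction rules \textbf{(R1)}, \textbf{(R2)} is both unnecessary and, as written, incorrect.

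Concretely: your proposed move ``multiply by $x^{q^2-1}$ and reduce'' fails. If $a\ge 1$ then $a+q^2-1\ge q^2$, and applying \textbf{(R2)} once sends $x^{a+q^2-1}y^b$ straight back to $x^ay^b$; the gain of $q(q^2-1)$ in $\nu$ is exactly undone. So this produces no larger representative. Similarly, the monomial $x^{a-(q+1)}y^q$ you mention for the $b=0$ case is not reduced, and tracking it through \textbf{(R1)} does not obviously yield a strict gain. The paper bypasses all of this: no reductions are used at all. For $b=0$ one takes $f'=x^ay^{q-1}$, which is already reduced (since $a<q^2$ and $q-1<q$) and has $\nu(f')=aq+(q-1)(q+1)=\nu(f)+(q^2-1)$. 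For $b>0$ and $a\le(q-2)(q+1)$ one takes $f'=x^{a+(q+1)}y^{b-1}$, which is reduced (since $a+(q+1)\le(q-1)(q+1)<q^2$ and $b-1\ge 0$) and has $\nu(f')=\nu(f)+(q^2-1)$. Both cases are two-line computations; the ``boundary/carry behaviour'' you anticipate as the main obstacle simply does not arise.
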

\begin{proof}
  Assume first that $b=0$. Then $f'=x^ay^{q-1}$ satisfies $\nu(f')=\nu(f)+q^2-1$, meaning that $\nu(f')>\nu(f)$ and $\nu(f')\equiv\nu(f)\pmod{q^2-1}$. But this contradicts the choice of $f$. Thus, let $b>0$, and consider the case $a\leq (q-2)(q+1)$. As before, we can find $f'=x^{a+(q+1)}y^{b-1}\in\mathcal{M}$ satisfying $\nu(f')=\nu(f)+q^2-1$, which is again a contradiction.
\end{proof}

\begin{lemma}\label{lem:dividesXDeg}
  Let $f=x^ay^b\in\mathcal{M}$. Then $q+1\mid \nu(f)$ if and only if $q+1\mid a$.
\end{lemma}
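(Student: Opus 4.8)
The final statement to prove is Lemma~\ref{lem:dividesXDeg}: for $f=x^ay^b\in\mathcal{M}$, we have $q+1\mid\nu(f)$ if and only if $q+1\mid a$. Recall that $\nu(f)=aq+b(q+1)$ with $0\le b<q$.

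The plan is to work directly with the formula $\nu(f)=aq+b(q+1)$ modulo $q+1$. Since $b(q+1)\equiv 0\pmod{q+1}$, we immediately get $\nu(f)\equiv aq\pmod{q+1}$. Now $q\equiv -1\pmod{q+1}$, so $aq\equiv -a\pmod{q+1}$. Therefore $q+1\mid\nu(f)$ if and only if $q+1\mid a$, which is exactly the claim. This is essentially a one-line congruence computation, so there is no real obstacle; the only thing to be careful about is that the equivalence genuinely goes both ways, which it does since we have an honest congruence $\nu(f)\equiv -a\pmod{q+1}$ rather than just a one-directional implication.

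I would write the proof as follows: Since $f=x^ay^b$ is a reduced monomial, $\nu(f)=aq+b(q+1)$. Reducing modulo $q+1$ and using $q\equiv-1\pmod{q+1}$ gives $\nu(f)\equiv aq\equiv -a\pmod{q+1}$. Hence $q+1\mid\nu(f)$ if and only if $q+1\mid a$. No case analysis or appeal to the structure of $\mathcal{M}$ beyond the definition of $\nu$ on monomials is needed, and the bound $0\le b<q$ (i.e.\ that the monomial is reduced) is not even used — only that $\nu$ has the stated form. This lemma is presumably invoked in the proof of Proposition~\ref{papendice} to count, among monomials in a fixed residue class mod $q^2-1$, how many additionally satisfy $q+1\mid\nu(f)$, which is why it is phrased in terms of divisibility of the $x$-exponent.
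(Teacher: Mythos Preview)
Your proof is correct and essentially identical to the paper's: both reduce $\nu(f)=aq+b(q+1)$ modulo $q+1$ and observe that $q+1\mid aq$ if and only if $q+1\mid a$ (the paper phrases this via $\gcd(q,q+1)=1$, you via $q\equiv -1\pmod{q+1}$, which amounts to the same thing).
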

\begin{proof}
  We have $\nu(f)=aq+b(q+1)$, which is divisible by $q+1$ if and only if $q+1\mid a$ because $q$ and $q+1$ are relatively prime.
\end{proof}

\renewcommand*{\thetheorem}{\ref{papendice}}

\begin{proposition} 
(a) The map $\mathcal{M}_*\rightarrow \mathbb{Z}/(q^2-1)\mathbb{Z}$ given by $f\mapsto \nu(f)\bmod{(q^2-1)}$ is surjective. \newline
(b) For $0\leq k<q^2-1$, the number of monomials $f\in\mathcal{M}$ with $\nu(f)\equiv k\pmod{q^2-1}$ is
\begin{itemize}
  \item $q+2$ when $k=0$
  \item $q+1$ when $k\neq 0$ and $q+1\mid k$
  \item $q$ when $k\neq 0$ and $q+1\nmid k$
\end{itemize}
\end{proposition}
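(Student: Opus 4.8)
The plan is to count, for each residue class $k \bmod (q^2-1)$, the reduced monomials $f = x^a y^b$ with $0 \le a < q^2$, $0 \le b < q$, and $\nu(f) = aq + b(q+1) \equiv k \pmod{q^2-1}$. The key algebraic observation is that modulo $q^2 - 1$ one has $q \equiv -(q+1) + (q^2-1) \cdot 1$... more usefully, $q^2 \equiv 1$, so $q \cdot q \equiv 1$, i.e. $q$ is its own inverse mod $q^2-1$; and $aq + b(q+1) = aq + bq + b \equiv (a+b)q + b \pmod{q^2-1}$. So $\nu(f) \equiv q(a+b) + b$. First I would fix $b \in \{0,1,\dots,q-1\}$ and ask how many $a \in \{0,\dots,q^2-1\}$ give $\nu(f) \equiv k$. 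Since $\gcd(q, q^2-1) = 1$, as $a$ ranges over a complete residue system mod $q^2-1$ the value $aq + b(q+1) \bmod (q^2-1)$ also ranges over a complete residue system; so there is exactly one $a_0 \in \{0,1,\dots,q^2-2\}$ solving the congruence, and the range $\{0,\dots,q^2-1\}$ contains that $a_0$ plus possibly a second solution $a_0 + (q^2-1)$ when $a_0 = 0$ (i.e. $a_0 + (q^2-1) = q^2 - 1 < q^2$), and otherwise no second solution. This already gives surjectivity (part (a)): for each $k$ and each $b$ there is a valid monomial, so certainly the map is onto — and one should check these monomials lie in $\mathcal{M}_*$, i.e. have order $\le m^*$; since $a \le q^2-1$ and $b \le q-1$ we get $\nu(f) \le (q^2-1)q + (q-1)(q+1) = q^3 + q^2 - q - 1 = n + 2g - 1$, but we need $\le m^* \approx n/2$, so one must argue that we can always pick the representative with $b = 0$ or small enough — actually for part (a) it suffices that for each $k$ \emph{some} monomial of order $\le m^*$ hits $k$; I would handle this by noting $\mathcal{M}_*$ contains all monomials up to order $m^*$ and $q^2 - 1 \le m^*$ (true for $q \ge 2$), so the monomials $1, x, x^2, \dots$ of orders $0, q, 2q, \dots$ already cover many classes, and combined with small powers of $y$ one covers all; alternatively, cite that $\mathcal{M}_{q^2-1+g}$ suffices.

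For part (b), the main count is: the total number of monomials $f \in \mathcal{M}$ with $\nu(f) \equiv k$ equals $\sum_{b=0}^{q-1} (\text{number of valid } a)$. From the analysis above, each $b$ contributes exactly $1$, \emph{except} that we get an extra $+1$ for each $b$ such that the unique solution $a_0 \in \{0,\dots,q^2-2\}$ of $a_0 q + b(q+1) \equiv k$ equals $0$. Now $a_0 = 0$ means $b(q+1) \equiv k \pmod{q^2-1}$. So the number of monomials in class $k$ is $q + \#\{b \in \{0,\dots,q-1\} : b(q+1) \equiv k \pmod{q^2-1}\}$. The hard part — really the only subtle point — is counting solutions $b$ to $b(q+1) \equiv k \pmod{q^2-1}$ with $0 \le b \le q-1$. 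Writing $q^2 - 1 = (q-1)(q+1)$, this congruence has a solution iff $\gcd(q+1, q^2-1) = q+1$ divides $k$, and when it does, the solutions form a single residue class modulo $(q^2-1)/(q+1) = q-1$. So: if $q+1 \nmid k$, there are $0$ such $b$, giving total $q$; if $q+1 \mid k$, the solutions $b$ are those in a fixed class mod $q-1$, and in the range $\{0,1,\dots,q-1\}$ there are exactly two of them (the residue $b_0 \in \{0,\dots,q-2\}$ and $b_0 + (q-1)$, the latter being $\le q-1$ only when $b_0 = 0$) — wait, more carefully: a class mod $q-1$ meets $\{0,\dots,q-1\}$ (which has $q$ elements) in exactly $\lceil q/(q-1)\rceil = 2$ points when the residue is $0$, and in exactly $1$ point otherwise. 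The residue is $0$, i.e. $b \equiv 0 \pmod{q-1}$ solves $b(q+1)\equiv k$, precisely when $k \equiv 0 \pmod{q^2-1}$, i.e. $k = 0$. So: $k = 0$ gives $2$ values of $b$, total $q+2$; $k \ne 0$ with $q+1 \mid k$ gives $1$ value of $b$, total $q+1$; $q+1 \nmid k$ gives $0$, total $q$. This matches the claimed counts.

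I would then write this up in three short steps: (i) reduce $\nu(f) \bmod (q^2-1)$ to $(a+b)q + b$ and establish the bijection-counting for $a$ given $b$, using $\gcd(q, q^2-1)=1$; this also yields (a); (ii) reduce the count in (b) to $q$ plus the number of $b \in \{0,\dots,q-1\}$ with $b(q+1) \equiv k \pmod{q^2-1}$, invoking Lemma~\ref{lem:dividesXDeg} to identify these $b$ (indeed $a_0 = 0$ forces $q+1 \mid \nu(f)$, consistent with the lemma); (iii) count those $b$ by the divisibility/residue-class argument above, splitting into the three cases. The main obstacle is purely bookkeeping: being careful about which representative in $\{0,\dots,q^2-1\}$ (for $a$) and $\{0,\dots,q-1\}$ (for $b$) actually lies in the allowed range, since that "$+1$ when the small solution is $0$" phenomenon is exactly what produces the $q+2$ versus $q+1$ versus $q$ trichotomy; Lemmas~\ref{lem:largestElement} and \ref{lem:dividesXDeg} are the tools that make this clean, the former pinning down the unique "largest" monomial in each class (which is the one with $a \in ((q-2)(q+1), (q-1)(q+1)]$) and the latter linking the extra solution to the divisibility condition $q+1 \mid k$.
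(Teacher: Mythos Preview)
Your argument for part~(b) is correct and takes a genuinely different route from the paper's. The paper fixes the residue class $k$, identifies the monomial $f^\ast=x^{a^\ast}y^{b^\ast}$ of \emph{largest} order in that class (using Lemma~\ref{lem:largestElement} to pin down the range of $a^\ast$ and $b^\ast$), and then explicitly writes down two finite sequences of monomials obtained by successively trading a factor $x^{q+1}$ for a factor $y$; it checks by hand which of these lie in $\mathcal{M}$ and uses the global count $q^3=|\mathcal{M}|$ to certify that nothing was missed. Your approach instead fixes $b$ and counts the $a\in\{0,\dots,q^2-1\}$ solving $aq+b(q+1)\equiv k\pmod{q^2-1}$ using only $\gcd(q,q^2-1)=1$, then reduces everything to counting $b\in\{0,\dots,q-1\}$ with $b(q+1)\equiv k\pmod{q^2-1}$, which is immediate from $q^2-1=(q-1)(q+1)$. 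This is shorter and purely arithmetic; in fact you do not need Lemmas~\ref{lem:largestElement} or \ref{lem:dividesXDeg} at all, despite invoking them at the end of your sketch. What the paper's construction buys is an explicit list of the monomials in each class; what your argument buys is a cleaner count with no case-by-case verification that exponents stay in range.

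For part~(a), both you and the paper are casual: the paper simply declares it ``clear,'' while you correctly flag that one must produce a monomial of order $\le m^\ast$ in each class and gesture at an argument without completing it. One clean way to finish: the Frobenius number of $\langle q,q+1\rangle$ is $q^2-q-1$, so for any $k\in\{0,\dots,q^2-2\}$ either $k$ itself or $k+(q^2-1)$ lies in the Weierstrass semigroup, and in either case this value is at most $2q^2-q-2\le m^\ast$ for all $q\ge 2$.
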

\begin{proof}
  (a) is clear. For (b) we show how to construct the given number of monomials in each of the three cases. To see that there cannot be any additional monomials, note that there are $q-2$ values of $k$ such that $k\neq 0$ and $q+1\mid k$ and $q^2-q$ values such that $k\neq 0$ and $q+1\nmid k$. Hence, the total number of monomials considered is $(q+2)+(q-2)(q+1)+(q^2-q)q=q^3$ which is exactly the number of monomials in $\mathcal{M}$.

  Denote by $f^\ast=x^{a^\ast}y^{b^\ast}\in\mathcal{M}$ the monomial of largest order such that $\nu(f^\ast)\equiv k\pmod{q^2-1}$. Consider now the following two sequences of rational functions
  \begin{align}
    x^{a^\ast-i(q+1)}y^{b^\ast+i},&\quad i=0,1,\ldots,q-1-b^\ast\label{eq:firstSequence}\\
    x^{a^\ast-(q-1-b^\ast+i)(q+1)}y^i,&\quad i=0,1,\ldots,b^\ast.\label{eq:secondSequence}
  \end{align}
  The reader may note that all expressions in~\eqref{eq:firstSequence} and \eqref{eq:secondSequence} are distinct;
  setting $i=0$ in~\eqref{eq:firstSequence} gives $f^\ast$; and all $f$ in~\eqref{eq:firstSequence} and \eqref{eq:secondSequence} satisfy $\nu(f)\equiv\nu(f^\ast)\pmod{q^2-1}$. Thus, we only need determine which of these rational functions are in $\mathcal{M}$.

  Consider first an $f=x^ay^b$ from~\eqref{eq:firstSequence}. It is immediately clear that $0\leq b<q$, so $f\in\mathcal{M}$ if and only if $0\leq a<q^2$.
  That $a<q^2$ follows immediately from $a\leq a^\ast$ combined with the fact that $f^\ast\in\mathcal{M}$. To prove that $a$ is nonnegative, note that $f^\ast$ satisfies Lemma~\ref{lem:largestElement} by definition. This ensures that $a^\ast=(q-2)(q+1)+t$ for some $0<t\leq q+1$ and that $b^\ast>0$. The latter implies that $i\leq q-2$, and we obtain
  \[
    a =(q-2)(q+1)+t-i(q+1)
      \geq (q-2)(q+1)+t - (q-2)(q+1)
    =t>0.
  \]
  Hence, the rational functions in~\eqref{eq:firstSequence} are in $\mathcal{M}$ for all values of $k$.

  Considering now $f=x^ay^b$ from~\eqref{eq:secondSequence}, it is again straight-forward to check that $0\leq b <q$ and $a<a^\ast<q^2$ as above. Additionally, if $i\leq b^\ast-1$, we have
  \[
    a=(q-2)(q+1)+t-(q-1-b^\ast+i)(q+1)\geq (q-2)(q+1)+t-(q-2)(q+1)>0,
  \]
  so these are all in $\mathcal{M}$ regardless of the value of $k$. If $i=b^\ast$, however, we obtain $a=(q-2)(q+1)+t-(q-1)(q+1)$, which is only non-negative if $t=q+1$. This happens if and only if $q+1\mid a$, which in turn happens if and only if $q+1\mid k$ by Lemma~\ref{lem:dividesXDeg}.

  Summing up, \eqref{eq:firstSequence} and \eqref{eq:secondSequence} gives $(q-b^\ast)+b^\ast=q$ monomials if $q+1\nmid k$ and $(q-b^\ast)+(b^\ast+1)=q+1$ monomials if $q+1\mid k$.

  The final part of the proof is the following observation. The monomial $1$ is not contained in~\eqref{eq:firstSequence} or \eqref{eq:secondSequence} for any $k$. Thus, when $k=\nu(1)=0$, we obtain one extra monomial in addition to the $q+1$ from~\eqref{eq:firstSequence} and \eqref{eq:secondSequence}.
\end{proof}

\end{document}